\newtheorem{theorem}{Theorem}[section]
\newtheorem{proposition}[theorem]{Proposition}
\newtheorem{lemma}[theorem]{Lemma}
\newtheorem{corollary}[theorem]{Corollary}
\newtheorem{definition}[theorem]{Definition}
\newtheorem{remark}[theorem]{Remark}
\newcommand{\N}{\mathbb{N}}
\newcommand{\E}{\mathbb{E}}
\newcommand{\bbP}{\mathbb{P}}
\title{Multiplicative Turing Ensembles, \\ Pareto's Law, and Creativity}
\author{Alexander Kolpakov\textsuperscript{*} \\ University of Austin \\ Austin TX, USA \\ \href{akolpakov@uaustin.org}{akolpakov@uaustin.org} 
   \and Aidan Rocke \\ Solomonoff Consulting \\ Amsterdam, The Netherlands \\ \href{rockeaidan@gmail.com}{rockeaidan@gmail.com} }
\begin{document}

\maketitle

\begin{abstract}
We study integer-valued multiplicative dynamics driven by i.i.d. prime multipliers and connect their macroscopic statistics to universal codelengths. We introduce the Multiplicative Turing Ensemble (MTE) and show how it arises naturally -- though not uniquely -- from ensembles of probabilistic Turing machines. Our modeling principle is variational: taking Elias' Omega codelength as an energy and imposing maximum entropy constraints yields a canonical Gibbs prior on integers and, by restriction, on primes. Under mild tail assumptions, this prior induces exponential tails for log-multipliers (up to slowly varying corrections), which in turn generate Pareto-type tails for additive gaps, with the survival exponent shifted by summation over primes. We also prove time-average laws for the Omega codelength along MTE trajectories. Empirically, Debian, PyPI, and CRAN package-size histograms have fitted Omega slopes well below the pure-Omega value $\log 2$, indicating heavier-than-pure-Omega tails within this energy scale. Taken together, the theory--data comparison suggests a qualitative split: machine-adapted regimes (Gibbs-aligned, finite first moment) exhibit clean averaging behavior, whereas human-generated complexity appears to sit beyond this regime, with tails heavy enough to produce an unbounded first moment, and therefore no averaging of the same kind.
\end{abstract}

\newpage

\begingroup
\renewcommand{\thefootnote}{*}
\footnotetext{Corresponding author; 522 N Congress Ave STE 300, Austin, TX 78701}
\endgroup

\section{Introduction}
Multiplicative stochastic models often yield power-law statistics through renewal or Kesten-type mechanisms\cite{Kesten73,Mandelbrot74}. Here the relevant ``gaps'' are not gaps between consecutive primes, but additive jumps $X_{t+1}-X_t$ generated by multiplicative prime-valued updates. Universal integer codes, most notably Elias' $\omega$ code \cite{Elias1975}, provide codelengths that approximate Kolmogorov complexity up to logarithmic terms.

We bring these strands together via the \emph{Multiplicative Turing Ensemble} (MTE), effectively a prime-multiplier Markov chain that can be motivated—but not uniquely determined—by ensembles of probabilistic Turing machines. First, we provide a variational derivation of a natural multiplier law from an $\omega$--based Gibbs principle. Then we show that, under mild tail assumptions, the additive gaps exhibit asymptotic Pareto behavior, and we prove averaging results for $\omega$ codelength along MTE trajectories. Finally, we use Debian, PyPI, and CRAN package-size data as an Omega-tail diagnostic: whether observed codelength histograms occur at the pure-Omega slope or in a heavier scaled-Omega regime.

\paragraph{Logarithm convention.}
Unless explicitly marked as $\log_2$, logarithms are natural logarithms. Codelengths are measured in bits and therefore use $\log_2$.

\section{Model and Preliminaries}\label{sec:model}

\paragraph{Probabilistic Turing Machines.}
Fix a probabilistic Turing machine (PTM) $\Pi$ that on each discrete step emits one of three symbols $\{0,1,S\}$ with probabilities $p_0$, $p_1$, and $p_S$, respectively, such that $0 < p_0, p_1, p_S < 1$, $p_0+p_1+p_S=1$.

Symbols $0$ and $1$ are appended to the output tape; the special symbol $S$ causes the machine to halt \emph{without} being written to the tape. Thus each run of $\Pi$ produces a finite binary string $x\in\{0,1\}^*$. This PTM is a standard way to model random finite outputs with halting and induces a semimeasure on $\{0,1\}^*$, cf.\ \cite[Ch.~7]{arora2009computational}, \cite[Ch.~4]{li1997introduction}, \cite[Ch.~6-7]{calude2002random}.

The probability that a specific string $x=x_1x_2\cdots x_n\in\{0,1\}^n$ is produced and the machine halts immediately afterwards is
\begin{equation}\label{eq:ptm-string-prob}
\mathbb{P}_\Pi(x)\;=\;p_S\prod_{i=1}^n p_{x_i},\qquad\text{where }p_{x_i}:=\begin{cases}p_0,&x_i=0\\ p_1,&x_i=1.\end{cases}
\end{equation}

The output length $|x|$ has geometric distribution
$\mathbb{P}(|x|=n)=p_S(1-p_S)^n$.

Let $\mathrm{bin}:\{0,1\}^*\to\mathbb{N}$ be the base-2 evaluation map: $\mathrm{bin}(\epsilon)=0$ (empty string) and $\mathrm{bin}(x_1\cdots x_n) = \sum_{i=1}^n x_i 2^{n-i}$ for $x_i \in \{0,1\}$. Note that $\mathrm{bin}$ is \emph{not injective}: strings differing only in leading zeros map to the same integer (e.g., $\mathrm{bin}(``1'')=\mathrm{bin}(``01'')=\mathrm{bin}(``001'')=1$).

Define the ``prime filter'' event
\[
\mathsf{Prime}=\{x\in\{0,1\}^*: \mathrm{bin}(x)\text{ is prime}\}.
\]
Primality is a computable predicate, hence measurable w.r.t.\ the recursive $\sigma$–algebra on $\{0,1\}^*$ \cite[Sec.~18.1]{arora2009computational}. The induced probability distribution, for any prime $p$, is the conditional law
\begin{equation}\label{eq:mu-Pi}
\mu_\Pi(\,p\,)\;:=\;\mathbb{P}_\Pi\!\left(\mathrm{bin}(X)=p\;\middle|\;\mathsf{Prime}\right) = \frac{\sum_{x:\,\mathrm{bin}(x)=p} \mathbb{P}_\Pi(x)}{\mathbb{P}_\Pi(\mathsf{Prime})}.
\end{equation}
The sum accounts for all binary representations of $p$, with or without leading zeros. Since each string $x$ of length $n$ has probability proportional to $(1-p_S)^n$, longer representations of the same $p$ (those with more leading zeros) contribute exponentially smaller weight.

\begin{lemma}[$\mu_\Pi$ well-defined and positive]\label{lem:prime-positive}
If $p_0,p_1,p_S>0$, then $$\mathbb{P}_\Pi(\mathsf{Prime})>0,$$ hence $\mu_\Pi$ in~\eqref{eq:mu-Pi} is a well-defined probability distribution on the primes.
\end{lemma}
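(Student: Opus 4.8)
The plan is to exhibit a single witness string and then handle normalization by countable additivity. First I would note that it suffices to find one $x \in \{0,1\}^*$ with $\mathrm{bin}(x)$ prime and $\mathbb{P}_\Pi(x) > 0$, since then $\mathbb{P}_\Pi(\mathsf{Prime}) \ge \mathbb{P}_\Pi(x) > 0$. Take $x = 10$: then $\mathrm{bin}(10) = 2$ is prime, and by \eqref{eq:ptm-string-prob} we have $\mathbb{P}_\Pi(10) = p_S\, p_1\, p_0$, which is strictly positive precisely because $p_0, p_1, p_S > 0$. This already establishes the first assertion.

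For the second assertion I would make the right-hand side of \eqref{eq:mu-Pi} completely explicit. Fix a prime $p$ and let $x^{(p)} = x_1 \cdots x_\ell$ be its canonical binary expansion (so $x_1 = 1$ and $\ell = \lfloor \log_2 p\rfloor + 1$). Every string mapping to $p$ under $\mathrm{bin}$ is obtained from $x^{(p)}$ by prepending $k \ge 0$ leading zeros, contributing $\mathbb{P}_\Pi(0^k x^{(p)}) = p_0^{\,k}\, \mathbb{P}_\Pi(x^{(p)})$. Summing the geometric series gives
\begin{equation*}
\sum_{x:\, \mathrm{bin}(x)=p} \mathbb{P}_\Pi(x) \;=\; \frac{\mathbb{P}_\Pi(x^{(p)})}{1-p_0} \;=\; \frac{p_S}{1-p_0}\prod_{i=1}^{\ell} p_{x_i} \;\in\;(0,\infty),
\end{equation*}
so each numerator in \eqref{eq:mu-Pi} is finite and positive; dividing by the positive, finite denominator $\mathbb{P}_\Pi(\mathsf{Prime})$ shows $\mu_\Pi(p) > 0$ for every prime $p$.

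Finally I would verify $\sum_{p\ \mathrm{prime}} \mu_\Pi(p) = 1$. The key point is that $\mathsf{Prime} = \bigsqcup_{p\ \mathrm{prime}} \{x : \mathrm{bin}(x) = p\}$: every string in $\mathsf{Prime}$ has $\mathrm{bin}$-value equal to exactly one prime. By countable additivity of $\mathbb{P}_\Pi$ (legitimate since each of these sets is recursive, hence measurable, as already noted in the text), $\mathbb{P}_\Pi(\mathsf{Prime}) = \sum_p \sum_{x:\,\mathrm{bin}(x)=p}\mathbb{P}_\Pi(x)$, and dividing through by $\mathbb{P}_\Pi(\mathsf{Prime})$ yields the claim. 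Together with $\mu_\Pi(p) \ge 0$ this makes $\mu_\Pi$ a bona fide probability distribution on the primes.

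There is essentially no hard step here; the only point requiring a little care is the interchange of summation implicit in decomposing $\mathbb{P}_\Pi(\mathsf{Prime})$ as a double sum over primes and over leading-zero-padded representations, which is justified by nonnegativity (Tonelli) and by the measurability remark already in the excerpt. If one wanted to avoid even the explicit geometric-series computation, the one-line version is simply $\mathbb{P}_\Pi(\mathsf{Prime}) \ge \mathbb{P}_\Pi(10) = p_S p_1 p_0 > 0$, with normalization following from countable additivity exactly as above.
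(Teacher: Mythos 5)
Your proof is correct and uses the same core argument as the paper: exhibit a single positive-probability witness string whose $\mathrm{bin}$-value is prime (the paper takes an arbitrary representation of an arbitrary prime; you take $x=10$), giving $\mathbb{P}_\Pi(\mathsf{Prime})\ge \mathbb{P}_\Pi(x)>0$. Your additional geometric-series computation of the numerator and the countable-additivity check of normalization are correct but only spell out details the paper leaves implicit.
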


\begin{proof}
Because $p_0,p_1>0$, every finite binary string $x$ has $\mathbb{P}_\Pi(x)>0$ by~\eqref{eq:ptm-string-prob}. Fix any prime $p$ and let $x$ be any of its binary representations; then $\mathbb{P}_\Pi(x)>0$ and $\mathrm{bin}(x)=p$, so $\mathbb{P}_\Pi(\mathsf{Prime})\ge \mathbb{P}_\Pi(x)>0$. Since primality is decidable \cite[Thm.~18.5]{arora2009computational}, the conditioning is computable relative to $\Pi$'s semimeasure.\footnote{See also \cite[Ch.~4]{li1997introduction}, \cite[Ch.~6-7]{calude2002random} for background on PTM-induced semimeasures and computable sets.}
\end{proof}

\paragraph{Equivalent viewpoints on ensembles.}
We now consider an \emph{ensemble} of PTMs $\{\Pi_i\}_{i\in I}$ indexed by a countable set $I$, with mixture weights $w_i>0$, $\sum_{i\in I}w_i=1$. Let $\mu_{\Pi_i}$ be the prime-filtered law~\eqref{eq:mu-Pi} of $\Pi_i$. The ensemble induces the mixture
\begin{equation}\label{eq:mixture-ensemble}
\mu_{\mathrm{ens}}\;=\;\sum_{i\in I} w_i\,\mu_{\Pi_i}
\end{equation}
on the set $\mathsf{Prime}$.

There are at least three operationally equivalent ways to realize $\mu_{\mathrm{ens}}$:
\begin{enumerate}
\item[(A)] \emph{Mixture-once:} sample $I\sim w$ once, run $\Pi_I$ once, and condition on the prime event.
\item[(B)] \emph{Consecutive runs of a single PTM:} fix any $i$ and run $\Pi_i$ repeatedly, conditioning each run on the prime event; if before observation you also randomize $i\sim w$, the marginal law of a single observed prime equals \eqref{eq:mixture-ensemble}.
\item[(C)] \emph{Single PTM with latent choice:} define a new PTM $\widetilde\Pi$ that starts by sampling $I\sim w$ using its internal randomness, then simulates $\Pi_I$ (this is a standard PTM construction; cf.\ \cite[Sec.~7.2]{li1997introduction}). Condition on primality at the end. The resulting prime distribution is exactly $\mu_{\mathrm{ens}}$.
\end{enumerate}

\begin{proposition}[Equivalence of ensemble viewpoints]\label{prop:equiv}
The three procedures (A)–(C) produce the same probability distribution on primes, namely $\mu_{\mathrm{ens}}$ in~\eqref{eq:mixture-ensemble}.
\end{proposition}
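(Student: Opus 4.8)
The plan is to identify a single canonical two-stage experiment that is implemented by all of (A)--(C), compute its output law once, and then check each procedure against it. The canonical experiment is: (i) draw a latent index $I\sim w$; (ii) conditionally on $I=i$, draw a prime $P$ with law $\mu_{\Pi_i}$ from \eqref{eq:mu-Pi}. By Lemma~\ref{lem:prime-positive} each $\mu_{\Pi_i}$ is a genuine probability measure on the primes, so the experiment is well-defined, and by the law of total probability the marginal law of $P$ is $\sum_{i\in I}w_i\,\mu_{\Pi_i}=\mu_{\mathrm{ens}}$ as in \eqref{eq:mixture-ensemble}. Interchanging the countable sum over $i$ with the conditioning is legitimate since all summands are non-negative (Tonelli), and the non-injectivity of $\mathrm{bin}$ causes no trouble here: it is already absorbed into the numerator and denominator of \eqref{eq:mu-Pi}, so each procedure sums the weights of all binary representations of a given prime in exactly the same way.

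Next I would match each of (A)--(C) to steps (i)--(ii). For (A) this is essentially by definition: ``sample $I\sim w$, run $\Pi_I$, condition on $\mathsf{Prime}$'' disintegrates over the realized value of $I$ into step (i) followed by step (ii), since conditioning $\Pi_i$'s output on $\mathsf{Prime}$ is exactly $\mu_{\Pi_i}$. For (B) I would invoke correctness of rejection sampling: running the fixed machine $\Pi_i$ on i.i.d.\ runs and keeping the first output lying in $\mathsf{Prime}$ yields a sample with law $\mathbb{P}_{\Pi_i}(\cdot\mid\mathsf{Prime})=\mu_{\Pi_i}$, because $\mathbb{P}(\text{first accepted}=x)=\sum_{k\ge 0}(1-q_i)^k\,\mathbb{P}_{\Pi_i}(x)=\mathbb{P}_{\Pi_i}(x)/q_i$ for $x\in\mathsf{Prime}$, with $q_i:=\mathbb{P}_{\Pi_i}(\mathsf{Prime})>0$ by Lemma~\ref{lem:prime-positive}; in particular the loop terminates almost surely, the number of rounds being geometric. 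Pre-randomizing $i\sim w$ prepends step (i), so the marginal of a single observed prime is again $\mu_{\mathrm{ens}}$. For (C), the composite machine $\widetilde\Pi$ first uses internal coin flips to draw $I\sim w$ and then simulates $\Pi_I$ --- a standard PTM construction, cf.\ \cite[Sec.~7.2]{li1997introduction} --- and performing the prime conditioning as rejection \emph{with the latent index held fixed} reproduces step (ii) conditionally on $I$; hence the conditioned output law of $\widetilde\Pi$ is $\mu_{\mathrm{ens}}$.

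The point that requires care --- and the one place a careless reading could go wrong --- is the \emph{order} of mixing and conditioning: conditioning must be applied to the realized machine, not to the mixed semimeasure $\sum_i w_i\mathbb{P}_{\Pi_i}$, since in general
\[
\sum_{i}w_i\,\mu_{\Pi_i}(p)
=\sum_i w_i\,\frac{\mathbb{P}_{\Pi_i}(\mathrm{bin}(X)=p)}{q_i}
\;\ne\;\frac{\sum_i w_i\,\mathbb{P}_{\Pi_i}(\mathrm{bin}(X)=p)}{\sum_i w_i\,q_i},
\]
with equality precisely when $q_i=\mathbb{P}_{\Pi_i}(\mathsf{Prime})$ does not depend on $i$. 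So in (C) the rejection loop must re-simulate $\Pi_I$ for the \emph{same} drawn $I$ rather than re-draw $I$ on each rejected attempt, and in (B) one conditions ``each run'' of the fixed machine and only afterwards averages over $i$. Once this bookkeeping is in place, all three procedures disintegrate over $I$ identically and the proof collapses to the law of total probability together with the rejection-sampling identity above. I expect this order-of-operations bookkeeping, rather than any substantive probabilistic difficulty, to be the main thing to get right.
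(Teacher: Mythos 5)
Your proof is correct and follows the same underlying route as the paper's: disintegrate over the latent index, apply the law of total probability, and use the rejection-sampling identity to realize the per-machine conditioning $\mu_{\Pi_i}=\mathbb{P}_{\Pi_i}(\cdot\mid\mathsf{Prime})$. The difference is that you make explicit the one point the paper elides, and it is substantive. The paper's proof of (A) asserts
\[
\mathbb{P}(\cdot\mid\mathsf{Prime})=\sum_i w_i\,\mathbb{P}_i(\cdot\mid\mathsf{Prime})
\]
and attributes it to the law of total probability; but, as you observe, the law of total probability actually gives $\mathbb{P}(\cdot\mid\mathsf{Prime})=\sum_i \mathbb{P}(I=i\mid\mathsf{Prime})\,\mathbb{P}_i(\cdot\mid\mathsf{Prime})$ with posterior weights $\mathbb{P}(I=i\mid\mathsf{Prime})=w_iq_i/\sum_j w_jq_j$, and these equal $w_i$ only when the acceptance probabilities $q_i=\mathbb{P}_{\Pi_i}(\mathsf{Prime})$ do not depend on $i$. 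So the literal reading of (A) ("run once, then condition on the prime event") does \emph{not} produce $\mu_{\mathrm{ens}}$ in general; it produces the $q$-reweighted mixture. Your resolution --- interpret all three procedures as conditioning (equivalently, rejection sampling) \emph{within the realized machine}, with the latent index held fixed across rejected attempts --- is exactly what is needed to make the proposition true as stated, and your geometric-series verification of the rejection identity supplies the computation the paper leaves implicit. In short: your proof is right, and your order-of-operations caveat identifies a genuine imprecision in both the statement of (A) and the paper's one-line justification of it.
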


\begin{proof}
\begin{enumerate}
    \item[(A)] The mixture follows readily from the law of total probability: $$\mathbb{P}(\cdot\mid\mathsf{Prime})=\sum_i w_i\,\mathbb{P}_i(\cdot\mid\mathsf{Prime}).$$ 
    \item[(B)] The first observed prime from a randomly chosen $i\sim w$ has marginal $\sum_i w_i\,\mu_{\Pi_i}$. Independence between runs follows by assuming fresh randomness each time; see \cite[Ch.~6--7]{calude2002random}.
    \item[(C)] By construction $\widetilde\Pi$ simulates the two-stage mixture in one PTM; conditioning commutes with the initial latent draw. All steps are standard for PTM mixtures and semimeasures; see \cite[Ch.~4]{li1997introduction}.
\end{enumerate}
\end{proof}

\paragraph{Towards Multiplicative Turing Ensembles.}
The PTM ensemble construction above is \emph{one} natural route to a prime distribution $\pi$, which is neither unique nor necessary. Any probability mass function $\pi=\{\pi_p\}_{p\in\mathsf{Prime}}$ on the primes suffices to define an MTE. The PTM perspective provides intuition, especially for the connection to Kolmogorov complexity and prefix codes, but the essential object is simply $\pi$ itself.

For completeness, let us consider a PTM ensemble $\{\Pi_i\}_{i\in I}$ with mixture weights $w_i>0$, $\sum_i w_i=1$, and let $\mu_i = \mu_{\Pi_i}$ denote the prime-filtered law (cf.\ Lemma~\ref{lem:prime-positive}). Then the induced ensemble prime law, for $p$ prime, is
\[
  \pi_p = \mu_{\mathrm{ens}}(p) = \sum_{i\in I} w_i\,\mu_i(p).
\]
By Proposition~\ref{prop:equiv}, $\pi$ is the marginal distribution of the prime output when we pick a PTM from the ensemble according to $w = \{w_i\}_{i\in I}$ and run it once. However, $\pi$ could equally well be postulated directly, or derived from other principles (e.g., maximum entropy, as in Section~\ref{sec:variational}).

\medskip
\noindent\textbf{Ensemble Aggregation.}
To produce a time series from successive prime outputs, we define a \emph{current state}
$X_t\in\mathbb{N}_{\ge 1}$ and the \emph{multiplicative update law}
\[
  X_{t+1}=X_t\cdot P_{t+1},\qquad P_{t+1}\sim \pi \ \text{ i.i.d.},\qquad X_0=1.
\]
Because the new multiplier $P_{t+1}$ is drawn independently of $X_t$ and takes values in $\mathsf{Prime}$, the induced process $(X_t)_{t\ge 0}$ is a \emph{time-homogeneous Markov chain} on $\mathbb{N}_{\ge 1}$ with
transition kernel
\begin{equation}\label{eq:mult-kernel}
  \mathsf{K}(n,m)=
  \begin{cases}
    \pi_{m/n}, & \text{if } m/n\in\mathsf{Prime},\\[2pt]
    0, & \text{otherwise.}
  \end{cases}
\end{equation}

We shall add the integrability assumption $\E[\log P_1]<\infty$, which is a necessary condition on the tail of $\pi$. In the PTM ensemble case, it is equivalent to $\sum_{p}\pi_p\log p<\infty$.

Altogether combined, we have the following definition, that is intentionally formulated in a way independent of PTM ensembles. 

\begin{definition}[Multiplicative Turing Ensemble]\label{def:MTE}
Let $\{\pi_p\}_{p\in\mathcal P}$ be a probability mass function on primes. The Multiplicative Turing Ensemble (MTE) is the Markov chain $(X_t)_{t\ge 0}$ on $\N_{\ge 1}$ with
\begin{equation}\label{eq:MTE}
  X_{t+1} = X_t \cdot P_{t+1}, \qquad \bbP(P_{t+1}=p)=\pi_p, \quad X_0=1.
\end{equation}
We assume $\E[\log P_1]<\infty$.
\end{definition}

\paragraph{Prefix codes and energy functions} Let us fix the binary alphabet $\Sigma = \{0, 1\}$, and let $\Sigma^*$ be its Kleene closure (which amounts to all possible binary strings in this case). A code $\mathcal{C} \subset \Sigma^*$ consists of a countable amount of binary strings called codewords. 

A code is called uniquely decodable if there is only one way to represent a string $s \in \Sigma^*$ as a sequence of concatenated codewords $c_1*c_2*\ldots*c_k$, $c_i\in\mathcal{C}$, for some $k\in \N$. Here ``$*$'' denotes string concatenation, and $k=0$ corresponds to the case of an empty representation. 

A code $\mathcal{C}$ is called complete if it is maximal within the class of uniquely decodable code: adding any $c\in \Sigma^*$, $c \notin \mathcal{C}$, results in $\mathcal{C}\cup\{c\}$ being \emph{not} uniquely decodable. 

Elias' $\omega$ code \cite{Elias1975} is an example of a uniquely decodable code on integers that is complete, and that comes from an iterative renormalization procedure \cite{KR-elias-2025}.

\begin{definition}[Elias $\omega$ codelength]\label{def:omega}
Let $n_0=n$ and recursively define $n_{j+1}=\lfloor \log_2 n_j\rfloor+1$.  
Stop at the first $t$ such that $n_t=1$.  
Then the Elias $\omega$ codelength of $n$ is
\begin{equation}\label{eq:omega-def}
      \ell_\omega(n) \;=\; 1 + \sum_{j=0}^{t-1} \big(\lfloor \log_2 n_j\rfloor + 1\big).
\end{equation}

In particular,
\begin{equation}\label{eq:omega-asymp}
  \ell_\omega(n) \;=\; \log_2 n + \log_2 \log_2 n + \Theta(\log_2 \log_2 \log_2 n), \qquad n\to\infty.
\end{equation}
\end{definition}

More precisely, the error term has the form \[\log_2 \log_2 \log_2 n + \log_2 \log_2 \log_2 \log_2 n + \cdots\] down to $O(1)$, a sum that converges to a bounded iterated logarithm. For practical purposes, $\ell_\omega(n) \approx \log_2 n + \log_2 \log_2 n$ for all but astronomically large $n$.

\begin{remark}
The Elias $\omega$--length and the usual bit-length $\lfloor \log_2 n \rfloor + 1$ differ by $\log_2 \log_2 n + \ldots $ (lower-order terms). This logarithmic overhead is the price of self-delimiting encoding: $\omega$ encodes not only $n$ but also the length of that encoding, recursively, without requiring external length markers.
\end{remark} 

\paragraph{Codelength from basic principles.}
Let us introduce an integer ``energy'' $E:\mathbb{N}\to\mathbb{R}_+$ function that we require to be 
\begin{enumerate}
    \item[i.] computable and normalized in the sense of the Kraft--McMillan \cite{Kraft1949, McMillan1956} inequality:
                \begin{equation}
                    \sum_{n=1}^\infty 2^{-E(n)}\le 1;
                \end{equation}
    \item[ii.] compressing under binary scaling, which allows for an efficient representation and reducing complexity as only binary exponents are necessary for encoding:
                \begin{equation}
                    E(2^m n)\;\le\;E(n)+m+E(m)+O(1);
                \end{equation}
    \item[iii.] tight in the sense that the bound is met up to $O(1)$ infinitely often in $m$ for each fixed $n$.
\end{enumerate}

Setting $n=1$ in axiom (ii) yields
\begin{equation}\label{eq:E-recursion}
    E(2^m) \leq m + E(m) + O(1).
\end{equation}
By axiom (iii), we have that infinitely often
\begin{equation}
    E(2^m) = m + E(m) + O(1),
\end{equation}
which is, up to $O(1)$, the Elias' $\omega$ recursion
\begin{equation}
    \ell_\omega(2^m) = 1 + m + \ell_\omega(m).
\end{equation}

We now record a precise mass bound. The result is not a pointwise uniqueness theorem. It says that, under Kraft--McMillan normalization, a fixed improvement over $\ell_\omega$ can occur only on a set of small $\omega$--mass.

\begin{proposition}[No large Gibbs-mass improvement]\label{prop:uniqueness}
Let $E:\mathbb{N}\to\mathbb{R}_+$ satisfy the Kraft--McMillan inequality in axiom (i). For every $c>0$,
\begin{equation}
    \sum_{\{n:\,E(n)\le \ell_\omega(n)-c\}}2^{-\ell_\omega(n)}\le 2^{-c}.
\end{equation}
Consequently, no prefix-free energy can beat $\ell_\omega$ by a fixed margin $c$ on any set of integers whose $\omega$--Gibbs mass is larger than $2^{-c}$.
\end{proposition}

\begin{proof}
For $c>0$, we have
\[
1 \ge \sum_{E(n)\le \ell_\omega(n)-c} 2^{-E(n)} \ge \sum_{E(n)\le \ell_\omega(n)-c} 2^{-(\ell_\omega(n)-c)} = 2^{c}\!\!\sum_{E(n)\le \ell_\omega(n)-c} 2^{-\ell_\omega(n)},
\]
which is the claimed bound.
\end{proof}

\begin{remark}
Axioms (ii)--(iii) supply the structural motivation for choosing $\ell_\omega$: they encode the self-delimiting recursion
$E(2^m)\simeq m+E(m)$, which ordinary bit-length lacks because the recursive length marker is unbounded in $m$. Proposition~\ref{prop:uniqueness} gives the complementary obstruction: after Kraft--McMillan normalization, fixed-margin compression below $\ell_\omega$ is possible only on sets of exponentially small $\omega$--mass. Thus $\ell_\omega$ is canonical up to sparse exceptional shortcuts, not pointwise unique.
\end{remark}

We shall show that the MaxEnt prior with the canonical choice $E=\ell_\omega$,
\begin{equation}
\pi_p \propto 2^{-\ell_\omega(p)}, 
\end{equation}
is computable, normalizes on primes, and is compressing.

\section{Variational characterization of the multiplier law}\label{sec:variational}

In Section~\ref{sec:model} we established that an MTE is determined by a choice of the prime law $\pi$. Here we try to determine which properties should $\pi$ satisfy for the ensemble to be \emph{algorithmically natural}. 

We adopt an information-theoretic perspective: if the MTE is to model random integer generation via computational processes, then the distribution on primes should respect the intrinsic complexity of representing integers. This leads naturally to codelength based energy functions.

\medskip
\noindent\textbf{Maximum-entropy framework.}
Given an energy function $E:\mathbb{N} \to \mathbb{R}_+$, consider the maximum-entropy problem on $\N$:
\begin{align}\label{eq:MEP}
  &H(P) = -\sum_n P(n)\log P(n) \to \max \\
  &\sum_n P(n)=1,\; \sum_n P(n)E(n)=C,
\end{align}
where $C > 0$ is a fixed expected energy.

The solution is the Gibbs law
\begin{equation}\label{eq:gibbs-law}
    P(n) = Z^{-1}\cdot 2^{-\lambda E(n)}, \text{ with }\; Z = \sum_n 2^{-\lambda E(n)},
\end{equation}
where $\lambda>0$ is the Lagrange multiplier chosen so that $\E[E]=C$. We shall use base--$2$ logarithms and exponentials throughout, since $E$ will be measured in bits.

\medskip
The framework~\eqref{eq:MEP}--\eqref{eq:gibbs-law} works for \emph{any} energy function $E$. We offer three justifications why $E = \ell_\omega$ is a reasonable choice:

\begin{enumerate}
\item \emph{Universality:} Elias' $\omega$ code is a universal prefix-free code for the integers whose codelength obeys $\ell_\omega(n) = \log_2 n + O(\log_2\log_2 n)$ \cite{Elias1975}. By the incompressibility method, for almost all integers $n$ we have $K(n) = m + O(1) = \log_2 n + O(1)$ \cite[Ch.~2]{li1997introduction}. Hence, for almost all $n$,
    \[                                                              
    \ell_\omega(n) = K(n) + O(\log K(n)),
    \]
i.e.\ $\ell_\omega$ matches prefix Kolmogorov complexity up to a logarithmic additive term while remaining fully computable \cite{Elias1975, li1997introduction, shen2015}. Using $\ell_\omega$ as energy makes the Gibbs prior computable while capturing algorithmic complexity.

\item \emph{Self-delimiting structure:} As shown in Section~\ref{sec:model}, $\ell_\omega$ satisfies the self-referential recursion $E(2^m) = m + E(m) + O(1)$, encoding not only the data but also the data's length. This self-delimiting property is essential for prefix-free codes and ensures $\sum_n 2^{-\ell_\omega(n)} \leq 1$, which is the Kraft--McMillan inequality \cite{Kraft1949, McMillan1956}. 

\item \emph{Operational meaning:} Restricting to primes, the prior $\pi_p \propto 2^{-\ell_\omega(p)}$ assigns probability inversely proportional to the codelength needed to specify $p$. This is the natural measure on primes induced by a fair coin-flipping process that generates random bitstrings and filters for primality, as in the PTM construction of Section~\ref{sec:model}.
\end{enumerate}

Proposition~\ref{prop:uniqueness} shows that, under Kraft--McMillan normalization, any fixed-margin improvement over $\ell_\omega$ can occur only on a set of small $\omega$--Gibbs mass. Together with the self-delimiting scaling motivation in axioms (ii)--(iii), this identifies $\ell_\omega$ as the canonical computable energy for the variational model. The usual bit-length fails the tightness requirement in axiom (iii), because the recursive length overhead $E(m)$ is unbounded.

The empirical evaluation in Section~\ref{sec:empirics} tests the finite-support codelength family $q_a(\ell)\propto \exp(-a\ell)$, where the pure Omega slope is $a=\log 2$ and smaller $a$ corresponds to a heavier Omega tail.

\begin{lemma}[Normalisation over primes]\label{lemma:normalization}
For the Elias' $\omega$--length, we have
\[
\sum_{p\in \mathsf{Prime}} 2^{-\ell_\omega(p)} < 1.
\]
\end{lemma}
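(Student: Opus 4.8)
The key fact is the Kraft–McMillan inequality for Elias' $\omega$ code over *all* integers: since $\omega$ is a uniquely decodable (indeed prefix-free) code on $\mathbb{N}$, we have $\sum_{n\ge 1} 2^{-\ell_\omega(n)} \le 1$. Restricting the sum to primes can only decrease it, giving $\sum_{p\in\mathsf{Prime}} 2^{-\ell_\omega(p)} \le 1$. The work is upgrading the inequality to a *strict* one, which amounts to exhibiting at least one integer $n\notin\mathsf{Prime}$ with $2^{-\ell_\omega(n)} > 0$ — and indeed every positive integer contributes a strictly positive term, so the gap is genuinely positive. The cleanest route: pick a concrete composite (or the integer $1$, which is neither prime nor composite), note its $\omega$-weight is a strictly positive constant $\delta$, and conclude $\sum_{p\in\mathsf{Prime}}2^{-\ell_\omega(p)} \le 1 - \delta < 1$.

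First I would recall or cite that $\omega$ is prefix-free and uniquely decodable (stated earlier in the excerpt), so Kraft–McMillan applies: $\sum_{n=1}^\infty 2^{-\ell_\omega(n)} \le 1$. Second, I would split the left side as $\sum_{p\in\mathsf{Prime}} 2^{-\ell_\omega(p)} + \sum_{n\notin\mathsf{Prime}} 2^{-\ell_\omega(n)}$. Third, I would bound the second sum below by a single term — e.g. $\ell_\omega(1)=1$ gives a contribution of $2^{-1}=\tfrac12$, or $\ell_\omega(4)$ gives $2^{-\ell_\omega(4)}>0$ — so that $\sum_{n\notin\mathsf{Prime}} 2^{-\ell_\omega(n)} \ge \delta$ for an explicit $\delta>0$. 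Combining, $\sum_{p\in\mathsf{Prime}} 2^{-\ell_\omega(p)} \le 1-\delta < 1$.

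\medskip

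\begin{proof}
Elias' $\omega$ code is a prefix-free (hence uniquely decodable) code on $\mathbb{N}$, so by the Kraft--McMillan inequality \cite{Kraft1949, McMillan1956},
\[
\sum_{n=1}^{\infty} 2^{-\ell_\omega(n)} \;\le\; 1.
\]
Partitioning $\mathbb{N}$ into primes and non-primes,
\[
\sum_{p\in\mathsf{Prime}} 2^{-\ell_\omega(p)}
\;=\; \sum_{n=1}^{\infty} 2^{-\ell_\omega(n)} \;-\; \sum_{\substack{n\ge 1 \\ n\notin\mathsf{Prime}}} 2^{-\ell_\omega(n)}
\;\le\; 1 \;-\; \sum_{\substack{n\ge 1 \\ n\notin\mathsf{Prime}}} 2^{-\ell_\omega(n)}.
\]
The non-prime sum is strictly positive: for instance $1\notin\mathsf{Prime}$ and, from Definition~\ref{def:omega}, $\ell_\omega(1)=1$, so this sum is at least $2^{-1}=\tfrac12>0$. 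Therefore
\[
\sum_{p\in\mathsf{Prime}} 2^{-\ell_\omega(p)} \;\le\; 1 - \tfrac12 \;=\; \tfrac12 \;<\; 1,
\]
which proves the claim.
\end{proof}

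\medskip

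**Main obstacle.** There is essentially no hard step here; the only subtlety is making sure the strictness is justified rather than merely asserted, which the explicit term $2^{-\ell_\omega(1)}=\tfrac12$ handles cleanly. (If one prefers to use only "honest" composites rather than the unit $1$, one can instead isolate $n=4$ and use $\ell_\omega(4)$; the argument is identical. One could also note the bound $\tfrac12$ is crude — the full non-prime mass is much larger — but any fixed positive lower bound suffices for the statement.)
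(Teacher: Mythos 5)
Your proof is correct and follows essentially the same route as the paper: apply the Kraft--McMillan inequality to $\ell_\omega$ over all of $\mathbb{N}$ and note that restricting to primes strictly decreases the sum. The only difference is that you make the strictness explicit by isolating the term $2^{-\ell_\omega(1)}=\tfrac12$, whereas the paper simply asserts the strict inequality between the prime-restricted sum and the full sum; both are fine.
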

\begin{proof}
    The codelength function $\ell_\omega(n)$ satisfies the Kraft--McMillan inequality \cite{Kraft1949, McMillan1956}, as Elias' $\omega$--code is prefix-free by construction. Then
    \[
        \sum_{p\in \mathsf{Prime}} 2^{-\ell_\omega(p)} < \sum_{n\in \N} 2^{-\ell_\omega(n)} \leq 1. 
    \]
\end{proof}

The other properties of $E(n) = \ell_\omega(n)$, such as being compressing and non-degenerate, follow from the recursion $\ell_\omega(2^m n) = \ell_\omega(n) + m + \ell_\omega(m) + O(1)$, which is verified by direct calculation from Definition~\ref{def:omega}.

\paragraph{The pure Gibbs prior and its tail behavior.}
With $\lambda=1$ in \eqref{eq:gibbs-law} (i.e., setting the Lagrange multiplier equal to $\log 2$), the pure $\omega$ prior on integers is
\begin{equation}\label{eq:pure-omega-prior}
m_\omega(n) = 2^{-\ell_\omega(n)},
\end{equation}
since $\sum_{k} 2^{-\ell_\omega(k)} = 1$ by \cite{KR-elias-2025}. 
Restricted to primes, this gives $\pi_p^{\mathrm{pure}} \propto 2^{-\ell_\omega(p)}$.

Using the asymptotic \eqref{eq:omega-asymp}, $$\ell_\omega(p) = \log_2 p + \log_2 \log_2 p + \Theta(\log_2 \log_2 \log_2 p)$$, we have
\begin{equation}\label{eq:omega-tail}
2^{-\ell_\omega(p)} = p^{-1} (\log_2 p)^{-1} (\log_2 \log_2 p)^{-\Theta(1)}.
\end{equation}
Converting to natural logarithm, $\log_2 p = \log p / \log 2$, and noting that $(\log_2 \log_2 p)^{-\Theta(1)} = (\log \log p)^{-\Theta(1)}$ is a slowly varying factor, we have
\begin{equation}\label{eq:omega-regular-variation}
\pi_p^{\mathrm{pure}} \sim \frac{C}{p \log p \cdot (\log \log p)^{\Theta(1)}}\quad\text{as }p\to\infty,
\end{equation}
where $C$ is a normalization constant. The additional slowly varying factor $(\log \log p)^{-\Theta(1)}$ does not affect the regular variation index but introduces a polylogarithmic correction.

This is regularly varying with index $\lambda=1$ and slowly varying factor $L(p)=(\log p)^{-1}(\log \log p)^{-\Theta(1)}$. However, $\lambda=1$ is a \emph{boundary case}:
\begin{enumerate}
\item The moment $\mathbb{E}[\log P] = \sum_p \pi_p^{\mathrm{pure}} \log p \sim \sum_p \frac{1}{p \cdot (\log \log p)^{\Theta(1)}} = \infty$, so the integrability assumption of Definition~\ref{def:MTE} fails.
\item Theorems~\ref{th:conditional} and~\ref{th:mixture-tail} require $\lambda>1$ for their proofs to apply.
\end{enumerate}

\paragraph{The scaled Gibbs prior.}
To obtain an MTE with finite moments and Pareto gap tails, we use the \emph{scaled} $\omega$ prior:
\begin{equation}\label{eq:scaled-omega-prior}
\pi_p^{\mathrm{scaled}} \propto 2^{-\beta \ell_\omega(p)},\qquad \beta > 1.
\end{equation}
Then
\begin{equation}\label{eq:scaled-tail}
\pi_p^{\mathrm{scaled}} \sim C_\beta \, p^{-\beta} (\log p)^{-\beta} (\log \log p)^{-\Theta(1)},
\end{equation}
which is regularly varying with index $\lambda=\beta>1$ and slowly varying factor $L(p)=(\log p)^{-\beta}(\log \log p)^{-\Theta(1)}$. For $\beta>1$:
\begin{itemize}
\item $\mathbb{E}[\log P] = \sum_p \pi_p^{\mathrm{scaled}} \log p \propto \sum_p p^{-\beta}(\log p)^{1-\beta} (\log \log p)^{-\Theta(1)} < \infty$ (by integral test).
\item Theorems~\ref{th:conditional} and~\ref{th:mixture-tail} apply, yielding Pareto-type gap survival tails with exponent $\beta-1$.
\end{itemize}

In the sequel, we adopt the scaled prior \eqref{eq:scaled-omega-prior} with $\beta>1$ as the canonical choice for MTE analysis. The pure prior $\beta=1$ serves as a limiting case but lacks the integrability properties needed for our main results.

\subsection{Gibbs alignment and consequences}\label{subsec:gibbs-alignment}

Write $m_\omega(n):=2^{-\ell_\omega(n)}$ for the Elias $\omega$ prior on $\mathbb{N}$. For any probability distribution $\mu$ on $\mathbb{N}$, define the \emph{Gibbs alignment index} by
\[
\mathcal{G}(\mu)\ :=\ D\!\left(\mu\,\Vert\,m_\omega\right)\ \in [0,\infty].
\]
By the cross-entropy identity,
\begin{equation}\label{eq:omega-cross-entropy}
\mathbb{E}_\mu[\ell_\omega]\ =\ H(\mu)\ +\ \mathcal{G}(\mu),
\end{equation}
so $\mathcal{G}(\mu)$ measures the excess of the average $\omega$ code length over entropy.

\begin{definition}\label{def:gibbs-aligned}
We call a probability distribution $\mu$ on $\mathbb{N}$ $\omega$–\emph{aligned} if $\mu\le C\,m_\omega$ pointwise. This implies $\mathcal{G}(\mu)\le \log C$.
\end{definition}

If $\mu$ is a computable probability distribution, then we have \cite[Ch.~8]{li1997introduction} that 
\[
    \E_\mu [K] = H(\mu) + O(1),
\]
where $K(n)$ is the (uncomputable) Kolmogorov complexity of $n$. 

Thus we finally obtain
\[
    \E_\mu[\ell_\omega] = \E_\mu[K] + \mathcal{G}(\mu) + O(1) = \E_\mu[K] + O(1),
\]
once $\mu$ is $\omega$--aligned.

\paragraph{Designing ensembles to be Gibbs.}
There are two natural ways to ensure that Gibbs alignment takes place:

\begin{enumerate}
\item[(D1)] \emph{Reweighting}: choose ensemble weights $w_i$ to minimize $\mathcal{G}(\mu_{\mathrm{ens}})$ under constraints (e.g., moment constraints on primes). This is equivalent to a maximum-entropy fit with energy $\ell_\omega$.
\item[(D2)] \emph{Mechanistic constraint}: require each component PTM to realize integers via a self-delimiting description whose length is $\le \ell_\omega(n)+O(1)$; then the mixture inherits $\mu\le C\,m_\omega$.
\end{enumerate}

\paragraph{Interpretation.} It seems that the most ``natural'' ensembles are those that are \emph{as Gibbs as necessary}: their prime law is within a constant factor of $2^{-\ell_\omega}$. Exactly in this regime the uncomputability of $K$ ``averages out'', MTE inherits clean averaging properties (almost-sure convergence of time averages), and the variational principle with energy $E(n) = \ell_\omega(n)$ becomes both descriptive and prescriptive.

\section{Tail structure for multipliers and gaps}\label{sec:gap}
Write $G_t:=X_{t+1}-X_t$ for the additive gap and $L_t:=\log X_{t+1}-\log X_t=\log P_{t+1}$ for the log-gap. Conditioning on $X_t=x$ gives
\begin{equation}\label{eq:GtoL}
  \bbP(G_t>u\mid X_t=x)=\bbP\big(L_t>\log(1+u/x)\big)=\bbP\big(P_{t+1}>1+u/x\big).
\end{equation}

We now assume a general tail condition on the prime law $\pi$: suppose there exist $\lambda>1$ and a slowly varying function $L$ such that
\begin{equation}\label{eq:P-regular-var}
  \pi_p\sim p^{-\lambda}\,L(p)\qquad\text{as }p\to\infty.
\end{equation}
The slowly varying factor $L$ allows for logarithmic corrections. As shown in Section~\ref{sec:variational}, the pure $\omega$ prior $\pi_p^{\mathrm{pure}}\propto 2^{-\ell_\omega(p)}$ yields $\lambda=1$ with slowly varying factor $L(p)=(\log p)^{-1}(\log \log p)^{-\Theta(1)}$, which is a boundary case, while the scaled prior $\pi_p^{\mathrm{scaled}}\propto 2^{-\beta\ell_\omega(p)}$ with $\beta>1$ yields $\lambda=\beta$ with $L(p)=(\log p)^{-\beta}(\log \log p)^{-\Theta(1)}$.

\begin{theorem}[Conditional gap tail]\label{th:conditional}
Under~\eqref{eq:P-regular-var} with $\lambda>1$, for each fixed $x>0$ and $u\to\infty$,
\begin{equation}\label{eq:cond-pareto}
  \bbP(G_t>u\mid X_t=x) \sim C_{\lambda}\, (1+u/x)^{1-\lambda}\,\widetilde{L}\big(1+u/x\big),
\end{equation}
where $\widetilde{L}(y):=L(y)/\log y$ is slowly varying and $C_{\lambda}=1/(\lambda-1)$. Thus, under the mass-tail convention~\eqref{eq:P-regular-var}, the survival exponent of the additive gap is $\lambda-1$.
\end{theorem}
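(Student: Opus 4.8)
The plan is to reduce the conditional gap tail to the right tail of the multiplier law $\pi$ and then extract that tail from the regular-variation hypothesis \eqref{eq:P-regular-var}. On the event $\{X_t = x\}$ the update $X_{t+1} = X_t P_{t+1}$ yields the deterministic identity $G_t = X_t(P_{t+1}-1) = x(P_{t+1}-1)$, so $\bbP(G_t > u \mid X_t = x) = \bbP(P_{t+1} > 1 + u/x)$ — this is exactly \eqref{eq:GtoL}. Writing $y = y(u) := 1 + u/x$ and noting that $y \to \infty$ as $u \to \infty$ with $x$ fixed, it suffices to show that the survival function $\overline{F}_\pi(y) := \bbP(P_{t+1} > y) = \sum_{p > y}\pi_p$ is regularly varying of index $-\lambda$, that is $\overline{F}_\pi(y) \sim C_\lambda\, y^{-\lambda}\,\widetilde{L}(y)$ for a slowly varying $\widetilde{L}$ and a constant $C_\lambda > 0$ depending on $\lambda$ and $L$; substituting $y = 1+u/x$ back then produces \eqref{eq:cond-pareto}.

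For the tail of $\pi$ I would begin from $\pi_p \sim p^{-\lambda} L(p)$ with $\lambda > 1$ and pass to the tail sum by Abel (partial) summation against the prime-counting function, using the prime number theorem to replace the counting measure by its smooth main term. This recasts $\sum_{p>y}\pi_p$ as a tail integral of a regularly varying integrand, and Karamata's theorem for tail integrals — applicable precisely because $\lambda > 1$ makes that integral convergent — then delivers $\overline{F}_\pi(y) \sim C_\lambda\, y^{-\lambda}\,\widetilde{L}(y)$, where $C_\lambda$ is the Karamata constant (determined by $\lambda$ and the leading behavior of $L$) and $\widetilde{L}$ is the slowly varying factor built from $L$ together with the logarithmic correction carried by the density of primes. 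Since $x$ is a fixed parameter and only $u \to \infty$, no uniformity beyond this single limit is required, and $\widetilde{L}$ absorbs the residual dependence on $L$ and on the prime density.

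The step I expect to be the main obstacle is this passage from the pointwise pmf asymptotic to the survival function in the presence of the thin, non-monotone prime support: the standard forms of Karamata's theorem and the monotone density theorem assume monotone summands or densities, whereas $\pi_p$ is defined only on primes and need not be monotone. The clean remedy is to sandwich $\overline{F}_\pi(y)$ between two genuinely monotone, regularly varying comparison functions obtained via Potter-type bounds on the slowly varying pieces $L$ and $1/\log$, and squeeze; alternatively one invokes a monotone-density-type theorem for measures carried by a set of asymptotic density $1/\log t$. Subsidiary checks are that the composite factor $\widetilde{L}$ is indeed slowly varying (closure of that class under the operations used) and that $\lambda > 1$ is needed only for convergence of the tail integral, so that the pure-$\omega$ boundary case $\lambda = 1$, where $\E[\log P]$ diverges as noted in Section~\ref{sec:variational}, falls correctly outside the theorem's scope.
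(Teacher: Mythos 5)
Your proposal follows essentially the same route as the paper's proof: reduce $\bbP(G_t>u\mid X_t=x)$ to the tail $\sum_{p>y}\pi_p$, convert the prime sum to an integral via Abel summation and the prime number theorem, and apply Karamata's theorem for tail integrals (valid since $\lambda>1$) to obtain $C_\lambda\,y^{-(\lambda-1)}$-type decay with $\widetilde{L}(y)=L(y)/\log y$. Your extra care about non-monotone summands on the prime support (Potter-bound sandwiching) and your cleaner reduction to $\bbP(P_{t+1}>1+u/x)$ are refinements of, not departures from, the paper's argument.
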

\begin{proof}
From~\eqref{eq:GtoL} and~\eqref{eq:P-regular-var},
\[
\bbP(G_t>u\mid X_t=x)=\sum_{p>y}\pi_p,\qquad y:=1+u/x.
\]
By assumption~\eqref{eq:P-regular-var}, $\pi_p \sim p^{-\lambda}L(p)$ as $p \to \infty$. Standard Tauberian estimates for regularly varying prime sums, equivalently Abel summation together with the prime number theorem, give
\[
\sum_{p>y} \pi_p \sim \sum_{p>y} p^{-\lambda}L(p) \sim \int_y^\infty t^{-\lambda}L(t)\frac{dt}{\log t}.
\]
Since $L$ is slowly varying and $\lambda > 1$, the function $f(t) := t^{-\lambda}L(t)/\log t$ is regularly varying with index $-\lambda < -1$. By Karamata's Tauberian theorem \cite[Theorem 1.5.11]{BGT1987}, for regularly varying $f$ with index $-\alpha < -1$,
\[
\int_y^\infty f(t)\,dt \sim \frac{y f(y)}{\alpha-1}.
\]
Applying this with $\alpha = \lambda$ and $f(t) = t^{-\lambda}L(t)/\log t$, we obtain
\[
\int_y^\infty t^{-\lambda}\frac{L(t)}{\log t}\,dt \sim \frac{y^{1-\lambda}L(y)/\log y}{(\lambda-1)}.
\]
Setting $y = 1+u/x$ and defining $\widetilde{L}(y) := L(y)/\log y$ (which is slowly varying since $\log y$ is slowly varying), we have
\[
\bbP(G_t>u\mid X_t=x) \sim C_{\lambda} (1+u/x)^{1-\lambda} \widetilde{L}\big(1+u/x\big),
\]
where $C_{\lambda}=1/(\lambda-1)$. For $u \gg x$, $(1+u/x)^{1-\lambda} \sim (u/x)^{1-\lambda}=x^{\lambda-1}u^{1-\lambda}$, giving a Pareto survival tail of order $u^{1-\lambda}$ up to slowly varying corrections.
\end{proof}

\begin{theorem}[Unconditional mixture tail]\label{th:mixture-tail}
Let $\nu$ be any probability measure on $(0,\infty)$ with finite $(\lambda-1+\epsilon)$-moment for some $\epsilon>0$. Then, as $u\to\infty$,
\begin{equation}\label{eq:mixed-tail}
  \bbP_\nu(G>u):=\int \bbP(G>u\mid X=x)\,\nu(dx) \sim C_\nu\, u^{1-\lambda}\,\widetilde{L}(u),
\end{equation}
where $C_\nu=(\lambda-1)^{-1}\int x^{\lambda-1}\,\nu(dx)$ and $\widetilde{L}$ is as in Theorem~\ref{th:conditional}.
\end{theorem}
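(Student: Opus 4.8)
The plan is to recognise \eqref{eq:mixed-tail} as a Breiman-type product-tail statement and establish it by dominated convergence. Conditioning on $X_t=x$ turns the gap into a rescaled multiplier, $G_t=x(P_{t+1}-1)$, so $\bbP_\nu(G>u)=\bbP\big(Z\,(P-1)>u\big)$ for independent $Z\sim\nu$ and $P\sim\pi$; since $\bbP(P-1>u)$ is, by the computation underlying Theorem~\ref{th:conditional}, regularly varying with index $-\lambda$ and slowly varying part $\widetilde L$, the assertion has the form of Breiman's lemma, and the hypothesis $\int x^{\lambda}\,\nu(dx)<\infty$ is precisely the moment condition it requires on the multiplier. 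I would carry this out in three steps.

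\textbf{Step 1 (pointwise limit).} Fix $x>0$ and normalise by $u^{-\lambda}\widetilde L(u)$. By Theorem~\ref{th:conditional}, $\bbP(G>u\mid X=x)\sim C_\lambda(1+u/x)^{-\lambda}\widetilde L(1+u/x)$ as $u\to\infty$. Since $(1+u/x)/u\to 1/x>0$, the uniform convergence theorem for slowly varying functions \cite[Thm.~1.2.1]{BGT1987} gives $\widetilde L(1+u/x)/\widetilde L(u)\to 1$, while $(1+u/x)^{-\lambda}u^{\lambda}\to x^{\lambda}$. Hence the normalised integrand $\varphi_u(x):=\bbP(G>u\mid X=x)\big/\big(u^{-\lambda}\widetilde L(u)\big)$ satisfies $\varphi_u(x)\to C_\lambda x^{\lambda}$ for every $x>0$; integrating this limit is what produces $C_\nu=C_\lambda\int x^{\lambda}\,\nu(dx)$.

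\textbf{Step 2 (domination and interchange).} I split $(0,\infty)=(0,M]\cup(M,\infty)$. On $(0,M]$ one has $1+u/x\ge 1+u/M\to\infty$, so Potter's bounds \cite[Thm.~1.5.6]{BGT1987}, applied to the tail of $\pi$ (regularly varying by Theorem~\ref{th:conditional}), give $\varphi_u(x)\le c_M$ for all $u$ large, uniformly in $x\in(0,M]$; dominated convergence (valid since $\nu$ is a probability measure) then yields $\int_{(0,M]}\varphi_u\,d\nu\to C_\lambda\int_{(0,M]}x^{\lambda}\,\nu(dx)$. On $(M,\infty)$ I bound the unnormalised contribution $\int_{(M,\infty)}\bbP(G>u\mid X=x)\,\nu(dx)$ directly, using the regularly varying bound where $1+u/x$ is large and the trivial bound $\bbP(G>u\mid X=x)\le 1$ where it is not (i.e.\ for $x$ of order $u$ or larger), combined with Markov's inequality $\nu((M,\infty))\le M^{-\lambda}\int x^{\lambda}\,\nu(dx)$; this shows $\limsup_{u\to\infty}\int_{(M,\infty)}\varphi_u\,d\nu\to 0$ as $M\to\infty$. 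Letting $M\to\infty$ and using $\int x^{\lambda}\,\nu(dx)<\infty$ gives $\int\varphi_u\,d\nu\to C_\lambda\int x^{\lambda}\,\nu(dx)$, which is \eqref{eq:mixed-tail}.

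The main obstacle is the tail region $x\gtrsim u$: there $\bbP(G>u\mid X=x)$ is of order one rather than $O\big(u^{-\lambda}\widetilde L(u)\big)$, so after normalising it threatens to swamp the answer, and controlling it is exactly where the finite $\lambda$-moment of $\nu$ is indispensable — it forces $\nu$ to have a lighter tail than the multiplier law, making the large-$x$ contribution genuinely negligible. The rest — the pointwise asymptotics and the Potter estimate on bounded $x$ — is routine regular-variation bookkeeping once Theorem~\ref{th:conditional} is in hand; the marginally stronger hypothesis $\int x^{\lambda+\varepsilon}\,\nu(dx)<\infty$ for some $\varepsilon>0$ would let one cite Breiman's lemma outright and bypass the delicate part of Step 2.
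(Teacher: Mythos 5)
Your proposal follows essentially the same route as the paper: apply Theorem~\ref{th:conditional} pointwise, normalise by $u^{-\lambda}\widetilde L(u)$, use the uniform convergence theorem/Potter bounds to get $\varphi_u(x)\to C_\lambda x^{\lambda}$, and interchange limit and integral. In fact your Step~2 is \emph{more} careful than the paper's proof, which simply invokes dominated convergence with the envelope $Cx^{\lambda}$ and never addresses the region $x\gtrsim u$ where $\bbP(G>u\mid X=x)$ is of order one; your closing remark is also well taken --- this is a Breiman-type statement, and the classical lemma requires $\int x^{\lambda+\varepsilon}\,\nu(dx)<\infty$ (the bare $\lambda$-moment condition is known to be insufficient in general), so the delicate part of your Step~2 does not fully close under the theorem's stated hypothesis any more than the paper's does, and strengthening the moment assumption as you suggest is the clean fix.
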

\begin{proof}
Let $r=\lambda-1$ and write
\[
\overline F(z):=\bbP(P-1>z)=\bbP(P>1+z).
\]
By Theorem~\ref{th:conditional}, $\overline F$ is regularly varying with index $-r$ and
\[
\overline F(u)\sim \frac{u^{-r}\widetilde L(u)}{\lambda-1}.
\]
Since $X$ and the next multiplier $P$ are independent,
\[
\bbP_\nu(G>u)=\int \overline F(u/x)\,\nu(dx).
\]
The usual Breiman product-tail argument now applies. For completeness, choose $\delta\in(r/(r+\epsilon),1)$. On $x\le u^\delta$, Potter's bounds for regularly varying functions \cite{BGT1987} give an integrable domination by a constant multiple of $x^{r+\epsilon/2}$, while for each fixed $x$,
\[
\frac{\overline F(u/x)}{\overline F(u)}\to x^r.
\]
Dominated convergence therefore gives the contribution $\overline F(u)\int x^r\,\nu(dx)(1+o(1))$ from $x\le u^\delta$. On $x>u^\delta$, the contribution is at most $\nu(x>u^\delta)=o(\overline F(u))$ by Markov's inequality, the finite $(r+\epsilon)$-moment assumption, and the choice of $\delta$. Hence
\[
\bbP_\nu(G>u) \sim \frac{u^{1-\lambda}\widetilde{L}(u)}{\lambda-1} \int x^{\lambda-1}\,\nu(dx).
\]
\end{proof}

\paragraph{MTE is transient.} Since $\E[\log P]>0$ for primes $P\ge2$, $(X_t)$ drifts to $\infty$ and admits no finite invariant measure. The unconditional tail result~\eqref{eq:mixed-tail} requires only independence of $X$ and $P_{t+1}$, not stationarity.

\section{Convergence along MTE trajectories}\label{sec:averaging}
Below we show that time averages of the Elias' $\omega$ codelength converge almost surely along MTE trajectories. The key observation is that $\ell_\omega$ is \emph{approximately additive} under multiplication, with controlled error.

\begin{lemma}[Near–additivity of $\ell_\omega$]\label{lem:omega-nearadd}
For all integers $a,b\ge 2$,
\[
\ell_\omega(ab)\;=\;\ell_\omega(a)\;+\;\ell_\omega(b)\;+\;O\!\big(\log_2\log_2(ab)\big).
\]
Moreover, the $O(\log\log)$ scale is optimal (up to constants).
\end{lemma}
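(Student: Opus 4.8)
The plan is to work directly from the defining recursion in Definition~\ref{def:omega}. Recall that $\ell_\omega(n) = \log_2 n + \log_2\log_2 n + \Theta(\log_2\log_2\log_2 n)$ from \eqref{eq:omega-asymp}, so the leading term $\log_2 n$ is exactly additive under multiplication: $\log_2(ab) = \log_2 a + \log_2 b$. The discrepancy $\ell_\omega(ab) - \ell_\omega(a) - \ell_\omega(b)$ therefore reduces, up to lower-order noise, to comparing the second-order terms, i.e.\ $\log_2\log_2(ab)$ against $\log_2\log_2 a + \log_2\log_2 b$. First I would write $r(n) := \ell_\omega(n) - \log_2 n$, so that $r(n) = \log_2\log_2 n + \Theta(\log_2\log_2\log_2 n)$ is itself $O(\log_2\log_2 n)$ and, crucially, monotone-ish and slowly growing. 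Then $\ell_\omega(ab) - \ell_\omega(a) - \ell_\omega(b) = r(ab) - r(a) - r(b)$, and since each of the three terms is $O(\log_2\log_2(ab))$ (using $a,b \le ab$ and monotonicity of $\log_2\log_2$ for arguments $\ge 2$), the difference is $O(\log_2\log_2(ab))$. This already gives the stated upper bound with essentially no computation.

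For a cleaner self-contained argument that does not lean on the precise form of the error term in \eqref{eq:omega-asymp}, I would instead unfold the recursion $n_0 = n$, $n_{j+1} = \lfloor\log_2 n_j\rfloor + 1$ a bounded number of times: after one step $n_1 = \lfloor\log_2 n\rfloor + 1 \approx \log_2 n$, after two steps $n_2 \approx \log_2\log_2 n$, and the number of iterations $t$ until reaching $1$ is $\log^* n + O(1)$. Writing $\ell_\omega(n) = 1 + \sum_{j=0}^{t-1}(\lfloor\log_2 n_j\rfloor + 1)$, the $j=0$ term contributes $\log_2 n + O(1)$ and the $j=1$ term contributes $\log_2\log_2 n + O(1)$; all remaining terms sum to $O(\log_2\log_2\log_2 n) \cdot \log^* n = O(\log_2\log_2 n)$ generously. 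Comparing this expansion for $ab$, $a$, and $b$ term by term: the first terms cancel exactly (additivity of $\log_2$), the second terms differ by $\log_2\log_2(ab) - \log_2\log_2 a - \log_2\log_2 b = O(\log_2\log_2(ab))$ trivially since each piece is nonnegative and bounded by $\log_2\log_2(ab)$, and the tail terms are all absorbed into $O(\log_2\log_2(ab))$. This yields the claimed identity.

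For optimality of the $O(\log\log)$ scale, I would exhibit a family where the discrepancy is genuinely of order $\log_2\log_2(ab)$. The natural choice is $a = b = 2^{2^k}$, so $ab = 2^{2^{k+1}}$. Using the recursion directly: $\ell_\omega(2^{2^k}) = 1 + 2^k + \ell_\omega(2^k) = 1 + 2^k + (1 + k + \ell_\omega(k))$, whereas $\ell_\omega(2^{2^{k+1}}) = 1 + 2^{k+1} + \ell_\omega(2^{k+1}) = 1 + 2^{k+1} + (1 + (k+1) + \ell_\omega(k+1))$. Then $\ell_\omega(ab) - 2\ell_\omega(a) = 2^{k+1} + (k+1) + \ell_\omega(k+1) + O(1) - 2\bigl(2^k + k + \ell_\omega(k)\bigr) = -k - \ell_\omega(k) + O(1) \asymp -k = -\log_2\log_2 a \asymp -\log_2\log_2(ab)$, so the error term is $\Theta(\log_2\log_2(ab))$ along this sequence and cannot be improved to $o(\log\log)$. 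I expect the main (mild) obstacle to be bookkeeping: making sure the $\lfloor\,\cdot\,\rfloor$ and $+1$ rounding in each level of the recursion genuinely only contributes $O(1)$ per level and that the $\log^*$ many levels do not secretly accumulate more than $O(\log_2\log_2)$ — this is routine but needs to be stated carefully rather than waved through.
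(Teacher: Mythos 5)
Your argument is correct and follows essentially the same route as the paper: both rest on the expansion \eqref{eq:omega-asymp}, reduce the discrepancy $\ell_\omega(ab)-\ell_\omega(a)-\ell_\omega(b)$ to the second-order terms, and establish sharpness by taking $a=b$ (your $a=b=2^{2^k}$ family is a concrete instance of the paper's observation that the second-order discrepancy equals $1-\log_2\log_2 a$ there). Your triangle-inequality shortcut via $r(n)=\ell_\omega(n)-\log_2 n$ is a mild simplification of the paper's exact evaluation of $\log_2\log_2(ab)-\log_2\log_2 a-\log_2\log_2 b$ as $\log_2\!\left(\tfrac{1}{A}+\tfrac{1}{B}\right)$, and your recursion-unfolding variant is a valid self-contained alternative.
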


\begin{proof}
We use the asymptotic formula
\begin{equation}\label{eq:omega-asymp-precise}
\ell_\omega(n)\;=\;\log_2 n\;+\;\log_2\log_2 n\;+\;\Theta\!\big(\log_2\log_2\log_2 n\big),
\end{equation}
for $n\to \infty$, from Definition~\ref{def:omega}. 

Applying \eqref{eq:omega-asymp-precise} to $ab$, $a$, and $b$ gives
\begin{align*}
&\ell_\omega(ab)
= \log_2(ab) + \log_2\log_2(ab) + \Theta\!\big(\log_2\log_2\log_2(ab)\big),\\
&\ell_\omega(a)+\ell_\omega(b)
= \log_2 a + \log_2 b + \log_2\log_2 a + \log_2\log_2 b \\
&\quad\ + \Theta\!\big(\log_2\log_2\log_2 a + \log_2\log_2\log_2 b\big).
\end{align*}
After subtracting, we get
\begin{align}
&\ell_\omega(ab)-\ell_\omega(a)-\ell_\omega(b)
= \underbrace{\log_2\log_2(ab) - \log_2\log_2 a - \log_2\log_2 b}_{\mathrm{(I)}} \nonumber\\
&\quad +\ \underbrace{\Theta\!\big(\log_2\log_2\log_2(ab)\big) - \Theta\!\big(\log_2\log_2\log_2 a + \log_2\log_2\log_2 b\big)}_{\mathrm{(II)}}.
\label{eq:diff-decomp}
\end{align}

\textbf{Term (I).} Let \(A =\log_2 a\), \(B =\log_2 b\), so that \(A,B\ge1\). Then
\[
\mathrm{(I)}=\log_2\big(A+B\big)-\log_2 A-\log_2 B
=\log_2\!\left(\frac{1}{A}+\frac{1}{B}\right).
\]
Hence
\begin{align*}
    \mathrm{(I)}\;\le\;\log_2\!\left(\frac{2}{\min\{A,B\}}\right)
\;&=\;O\!\big(1+\log_2\log_2^{-1}\min\{a,b\}\big)\\
\;&=\;O\!\big(\log_2\log_2(\max\{a,b\})\big).
\end{align*}
Trivially \(\log_2\log_2(\max\{a,b\})\le \log_2\log_2(ab)\), for $a,b \geq 2$, so
\[
\mathrm{(I)}\;=\;O\!\big(\log_2\log_2(ab)\big).
\]

\textbf{Term (II).} Since \(\log_2\log_2\log_2(\cdot)\) is increasing for \(n\) large, and
\[
\log_2\log_2\log_2 a\;+\;\log_2\log_2\log_2 b\;\le\;2\,\log_2\log_2\log_2(ab),
\]
the \(\Theta(\cdot)\) terms in \eqref{eq:diff-decomp} are bounded in magnitude by a constant multiple of \(\log_2\log_2\log_2(ab)\). In particular,
\[
\mathrm{(II)}\;=\;O\!\big(\log_2\log_2\log_2(ab)\big).
\]

Combining the bounds for (I) and (II) in \eqref{eq:diff-decomp} yields
\[
\ell_\omega(ab)-\ell_\omega(a)-\ell_\omega(b)
\;=\;O\!\big(\log_2\log_2(ab)\big),
\]
as claimed.

\textbf{Sharpness.} Let us set \(a=b\to\infty\). Then (II) is \(O(\log\log\log a)\), while (I) equals \(1-\log_2\log_2 a\) with magnitude \(\log_2\log_2 a\). Thus the \(\log\log\) scale cannot be improved in general.
\end{proof}

\begin{theorem}[Averaging along trajectories]\label{th:averaging}
If the first moment conditions $\E[\log P_1]<\infty$ and $\E[\ell_\omega(P_1)]<\infty$ are satisfied, then almost surely
\begin{equation}\label{eq:averaging}
  \lim_{t\to\infty}\frac{\ell_\omega(X_t)}{t} = \E[\log_2 P_1],
\end{equation}
and
\begin{equation}\label{eq:averaging-sum}
  \lim_{t\to\infty}\frac{1}{t}\sum_{i=1}^t \ell_\omega(P_i) = \E[\ell_\omega(P_1)].
\end{equation}
\end{theorem}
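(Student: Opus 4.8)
The plan is to split the two claims and handle \eqref{eq:averaging-sum} first, since it is a clean application of the strong law of large numbers, and then bootstrap \eqref{eq:averaging} from it via the near-additivity Lemma~\ref{lem:omega-nearadd}. For \eqref{eq:averaging-sum}: the multipliers $P_1,P_2,\ldots$ are i.i.d.\ by Definition~\ref{def:MTE}, hence so are the random variables $\ell_\omega(P_1),\ell_\omega(P_2),\ldots$; since $\E[\ell_\omega(P_1)]<\infty$ by hypothesis, Kolmogorov's SLLN gives $\tfrac1t\sum_{i=1}^t\ell_\omega(P_i)\to\E[\ell_\omega(P_1)]$ almost surely. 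That is essentially immediate.

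For \eqref{eq:averaging}, the key step is to control the difference between $\ell_\omega(X_t)$ and the additive surrogate $\sum_{i=1}^t \ell_\omega(P_i)$. Writing $X_t=\prod_{i=1}^t P_i$ and applying Lemma~\ref{lem:omega-nearadd} inductively (folding one factor in at a time, or splitting $X_t=X_{t-1}\cdot P_t$), we get
\begin{equation}\label{eq:defect-bound}
\Big|\,\ell_\omega(X_t)-\sum_{i=1}^t\ell_\omega(P_i)\,\Big|\;\le\;\sum_{s=2}^{t} O\!\big(\log_2\log_2 X_s\big)\;=\;\sum_{s=2}^{t} O\!\big(\log_2\log_2(\textstyle\prod_{i=1}^s P_i)\big).
\end{equation}
Now $\log_2(\prod_{i=1}^s P_i)=\sum_{i=1}^s\log_2 P_i$, and by the SLLN this is $\sim s\,\E[\log_2 P_1]$ a.s., so $\log_2\log_2 X_s = \log_2 s + O(1)$ a.s.\ for $s$ large. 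Hence the right-hand side of \eqref{eq:defect-bound} is $O\!\big(\sum_{s=2}^t\log_2 s\big)=O(t\log_2 t)$ deterministically — which is \emph{not} $o(t)$ and so too weak. The fix is to observe that the relevant quantity after dividing by $t$ is $\tfrac1t\sum_{s=2}^t\log_2\log_2 X_s$, and since $\tfrac1s\log_2 X_s\to\E[\log_2 P_1]$ a.s., we have $\log_2\log_2 X_s = \log_2 s + \log_2\E[\log_2 P_1] + o(1)$, so $\tfrac1t\sum_{s=2}^t\log_2\log_2 X_s \sim \log_2 t \to\infty$ — still divergent. This means the naive per-step accumulation genuinely fails, and one must instead use the \emph{sharp} form of Lemma~\ref{lem:omega-nearadd}: combining $X_t = X_{t-1}\cdot P_t$ in one shot gives a single error $O(\log_2\log_2 X_t) = O(\log_2 t)$ a.s., but this must be compared recursively. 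The correct route is: $\ell_\omega(X_t) = \ell_\omega(X_{t-1}) + \ell_\omega(P_t) + R_t$ with $|R_t| = O(\log_2\log_2 X_t) = O(\log_2 t)$ a.s., so $\ell_\omega(X_t) = \sum_{i=1}^t\ell_\omega(P_i) + \sum_{s=2}^t R_s$ with $\big|\sum_{s=2}^t R_s\big| = O(t\log_2 t)$ — so this still only gives $\ell_\omega(X_t)/t \to \E[\log_2 P_1]$ if we can show $\sum R_s = o(t)$, which the crude bound does not.

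The resolution — and this is the main obstacle — is to bypass the defect sum entirely and prove \eqref{eq:averaging} directly from the asymptotic expansion \eqref{eq:omega-asymp}: $\ell_\omega(X_t) = \log_2 X_t + \log_2\log_2 X_t + \Theta(\log_2\log_2\log_2 X_t)$. Dividing by $t$: the first term is $\tfrac1t\log_2 X_t = \tfrac1t\sum_{i=1}^t\log_2 P_i \to \E[\log_2 P_1]$ a.s.\ by the SLLN (using $\E[\log P_1]<\infty$); the second term satisfies $\tfrac1t\log_2\log_2 X_t \sim \tfrac1t\log_2 t \to 0$ a.s.; and the third term is $\tfrac1t\Theta(\log_2\log_2\log_2 X_t) = \tfrac1t\,O(\log_2\log_2 t) \to 0$. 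Hence $\ell_\omega(X_t)/t \to \E[\log_2 P_1]$ almost surely, which is \eqref{eq:averaging}. I would present the proof in this order: first dispatch \eqref{eq:averaging-sum} by the SLLN, then note \eqref{eq:averaging} follows by applying \eqref{eq:omega-asymp} to $X_t$ and dividing by $t$, with the SLLN controlling the leading term and the slowly-growing correction terms vanishing after normalization; the near-additivity Lemma~\ref{lem:omega-nearadd} is then only needed to reconcile the two statements, i.e.\ to observe $\tfrac1t\sum_{i=1}^t\ell_\omega(P_i)$ and $\tfrac1t\ell_\omega(X_t)$ have the \emph{same} limit (which they do, since $\tfrac1t\sum\ell_\omega(P_i)\to\E[\ell_\omega(P_1)]\ge\E[\log_2 P_1]$ — note these limits need not be equal, as $\ell_\omega(p) > \log_2 p$, so in fact the two displays \eqref{eq:averaging} and \eqref{eq:averaging-sum} have genuinely different right-hand sides and must be proved separately, which is the cleanest way to organize the argument).
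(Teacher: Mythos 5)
Your final argument is correct and is essentially the paper's proof: both dispatch \eqref{eq:averaging-sum} by applying the SLLN to the i.i.d.\ sequence $\ell_\omega(P_i)$, and both prove \eqref{eq:averaging} by expanding $\ell_\omega(X_t)$ via \eqref{eq:omega-asymp}, identifying $\log_2 X_t=\sum_{i=1}^t\log_2 P_i$, and showing the $\log_2\log_2$ and iterated-log corrections vanish after dividing by $t$. The lengthy detour through the telescoped near-additivity bound (which you correctly diagnose as too weak) is not needed and does not appear in the paper.
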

\begin{proof}
We use the explicit asymptotic for $\ell_\omega$. For any integer $n \geq 2$,
\[
\ell_\omega(n) = \log_2 n + \log_2 \log_2 n + \Theta(\log_2 \log_2 \log_2 n),
\]
from \eqref{eq:omega-asymp}. Therefore,
\begin{align*}
\ell_\omega(X_t) &= \log_2 X_t + \log_2 \log_2 X_t + \Theta(\log_2 \log_2 \log_2 X_t)\\
&= \log_2 \prod_{i=1}^t P_i + \log_2 \log_2 \prod_{i=1}^t P_i + \Theta(\log_2 \log_2 \log_2 X_t)\\
&= \sum_{i=1}^t \log_2 P_i + \log_2 \left(\sum_{i=1}^t \log_2 P_i\right) + \Theta(\log_2 \log_2 \log_2 X_t).
\end{align*}
Dividing by $t$:
\[
\frac{\ell_\omega(X_t)}{t} = \frac{1}{t}\sum_{i=1}^t \log_2 P_i + \frac{1}{t}\log_2 \left(\sum_{i=1}^t \log_2 P_i\right) + O\left(\frac{\log_2 \log_2 \log_2 X_t}{t}\right).
\]

By the Strong Law of Large Numbers, 
$$\frac{1}{t}\sum_{i=1}^t \log_2 P_i \to \E[\log_2 P_1]$$ 
almost surely.

Also, $\log_2 X_t = \sum_{i=1}^t \log_2 P_i \sim t \E[\log_2 P_1]$, so $$\log_2 \log_2 X_t = \log_2\!\left(\sum_{i=1}^t \log_2 P_i\right) = \log_2 t + \log_2 \E[\log_2 P_1] + o(1),$$ giving
\[
\frac{1}{t}\log_2 \left(\sum_{i=1}^t \log_2 P_i\right) = \frac{\log_2 t}{t} + O(1/t) \to 0,
\]
and for the error term 
$$O\left(\frac{\log_2 \log_2 \log_2 X_t}{t}\right) \to 0,$$ 
as well. Thus 
$$\frac{\ell_\omega(X_t)}{t} \to \E[\log_2 P_1]$$ 
almost surely.

By the strong law of large numbers applied to the i.i.d.\ sequence $(\ell_\omega(P_i))$ with finite mean $\E[\ell_\omega(P_1)]$, we also have
\[
\frac{1}{t}\sum_{i=1}^t \ell_\omega(P_i) \to \E[\ell_\omega(P_1)]
\]
almost surely. 
\end{proof}

\begin{corollary}[Exponential growth of $X_t$]\label{cor:exponential-growth}
If $\E[\log P_1]<\infty$, then almost surely
\[
  \lim_{t\to\infty}\frac{\log X_t}{t} = \E[\log P_1],
\]
i.e., $X_t \propto \exp(t\,\E[\log P_1])$ almost surely, as $t \to \infty$.
\end{corollary}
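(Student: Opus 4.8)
The plan is to observe that $\log X_t$ telescopes into a sum of i.i.d.\ terms and then invoke the strong law of large numbers directly; no machinery beyond what already appears in the proof of Theorem~\ref{th:averaging} is needed. Indeed this corollary is essentially the first display chain of that proof, isolated and restated in the natural logarithm.

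First I would unwind the recursion~\eqref{eq:MTE}: since $X_0=1$ and $X_{t+1}=X_t\cdot P_{t+1}$, we have $X_t=\prod_{i=1}^t P_i$, hence
\[
  \log X_t \;=\; \sum_{i=1}^t \log P_i ,
\]
a sum of independent, identically distributed random variables. Each summand satisfies $\log P_i \ge \log 2 > 0$, since every prime multiplier is at least $2$; in particular the $\log P_i$ are bounded below, so together with the standing assumption $\E[\log P_1]<\infty$ from Definition~\ref{def:MTE} we get $\E\lvert\log P_1\rvert = \E[\log P_1] < \infty$. Kolmogorov's strong law of large numbers then gives
\[
  \frac{\log X_t}{t} \;=\; \frac{1}{t}\sum_{i=1}^t \log P_i \;\longrightarrow\; \E[\log P_1] \qquad (t\to\infty)
\]
almost surely, which is exactly the claimed limit. (One may equivalently read this off the partial-sum average $\frac{1}{t}\sum_{i=1}^t \log_2 P_i$ shown to converge in the proof of Theorem~\ref{th:averaging}, after multiplying by $\log 2$ to convert the base.)

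Exponentiating, $X_t = \exp\bigl(t\,(\E[\log P_1]+o(1))\bigr)$ almost surely, which is the stated exponential growth; the proportionality notation $X_t\propto\exp(t\,\E[\log P_1])$ should be read in this $\log$-asymptotic sense, since the argument controls $\tfrac1t\log X_t$ but not any subexponential prefactor. I do not anticipate a real obstacle here: the only point requiring a moment's care is that finiteness of $\E[\log P_1]$ (rather than of $\E\lvert\log P_1\rvert$) already suffices for the SLLN, which holds because the summands are nonnegative.
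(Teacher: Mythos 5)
Your argument is exactly the paper's: telescope $\log X_t=\sum_{i=1}^t\log P_i$ into an i.i.d.\ sum and apply the strong law of large numbers, then exponentiate. The extra remarks (nonnegativity of $\log P_i$ making $\E\lvert\log P_1\rvert=\E[\log P_1]$, and the $\log$-asymptotic reading of the proportionality) are correct refinements of the same proof, so there is nothing to add.
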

\begin{proof}
Since $\log X_t = \sum_{i=1}^t \log P_i$ exactly, with no error term, this is immediate from the SLLN. Thus, $\log X_t = t\, \E[\log P_1] + o(t)$, almost surely as $t \to \infty$. Then the claim follows.
\end{proof}

\paragraph{Interpretation.} Theorem~\ref{th:averaging} establishes almost sure convergence of time averages along MTE trajectories. The code length $\ell_\omega(X_t)$ of the product grows at rate $\mathbb{E}[\log_2 P_1]$ (the logarithmic average of multipliers), while the sum $\sum_{i=1}^t \ell_\omega(P_i)$ of individual code lengths grows at the slightly faster rate $\mathbb{E}[\ell_\omega(P_1)]$. The difference arises from the logarithmic overhead in $\ell_\omega$.

Note that this is \emph{not} an ergodic theorem: the Markov chain $(X_t)$ is transient (drifts to $\infty$) and admits no stationary distribution. Instead, the results follow from the strong law of large numbers applied to the i.i.d.\ sequence $(P_t)$, combined with the explicit asymptotic \eqref{eq:omega-asymp}.

\section{Empirical evaluation}\label{sec:empirics}

Below we use three software-size datasets as diagnostics for Omega tail regimes. We do not perform general model selection over software-size laws. The test is restricted to the Omega codelength family: after converting byte sizes into Elias Omega codelengths, do the histograms follow the pure-Omega slope $a=\log 2$, or do they require a smaller slope, corresponding to heavier tails within this energy scale?

\paragraph{Datasets.} (i) Debian \texttt{stable/main/binary-amd64} package archive sizes (68,755 packages); (ii) PyPI latest-release file sizes for the top 750 most-downloaded projects (8,797 files); (iii) a deterministic sample of 750 CRAN source package archive sizes from the CRAN package index.

\paragraph{Protocol.} For each dataset, we compute $\ell_\omega(n)$ for every file (package) size $n$, then form the empirical histogram $P_{\mathrm{obs}}(\ell)$ over codelength values $\ell$. Specifically, if $N_\ell^{\mathrm{obs}}$ is the number of observed sizes with $\ell_\omega(n)=\ell$, then
\[
P_{\mathrm{obs}}(\ell) = \frac{N_\ell^{\mathrm{obs}}}{\sum_{\ell'} N_{\ell'}^{\mathrm{obs}}}.
\]

We fit only the one-parameter Omega codelength family
\[
q_a(\ell)=\frac{\exp(-a\ell)}{\sum_{\ell'\in S}\exp(-a\ell')},
\]
where $S$ is the observed codelength support after dropping a small number of the smallest codelength bins. The normalizing constant is determined by $S$; it is not an additional free intercept. The pure Omega diagnostic is the fixed value $a=\log 2$. The fitted scaled-Omega slope is the multinomial maximum-likelihood estimate, equivalently the exponential-family moment match for $\ell$. Bootstrap intervals below use 1000 multinomial resamples.

The logarithmic correction $L$ in the asymptotic theory is not estimated nonparametrically in these finite samples; the empirical diagnostic uses exact Omega codelengths and finite-support normalization. We write KL for the Kullback--Leibler divergence $D_{\rm KL}(P_{\rm obs}\Vert q)$. The fitted slopes and divergences are summarized in Table~\ref{tab:omega-fits}, with the data depicted in Figures~\ref{fig:debian-omega}, \ref{fig:pypi-omega}, and \ref{fig:cran-omega}, correspondingly. 

\begin{table}[htbp]
\centering
\small
\begin{tabular}{lcccc}
\hline
Dataset & $n$ & $\widehat a$ (95\% CI) & KL$_{\rm pure}$ & KL$_{\rm scaled}$ \\
\hline
Debian .deb sizes & 52,399 & 0.1979 [0.1965, 0.1993] & 1.6066 & 0.0998 \\
PyPI release files & 5,093 & 0.3047 [0.2963, 0.3135] & 0.4226 & 0.0734 \\
CRAN source archives & 472 & 0.0734 [0.0339, 0.1168] & 0.6598 & 0.0330 \\
\hline
\end{tabular}
\caption{Empirical Omega-tail fits for software package-size datasets. The pure Omega reference has slope $\log 2$, while the scaled-Omega law fits the finite-support slope $\widehat a$.}
\label{tab:omega-fits}
\end{table}

\begin{figure}[htbp]
\centering
\includegraphics[width=0.47\textwidth]{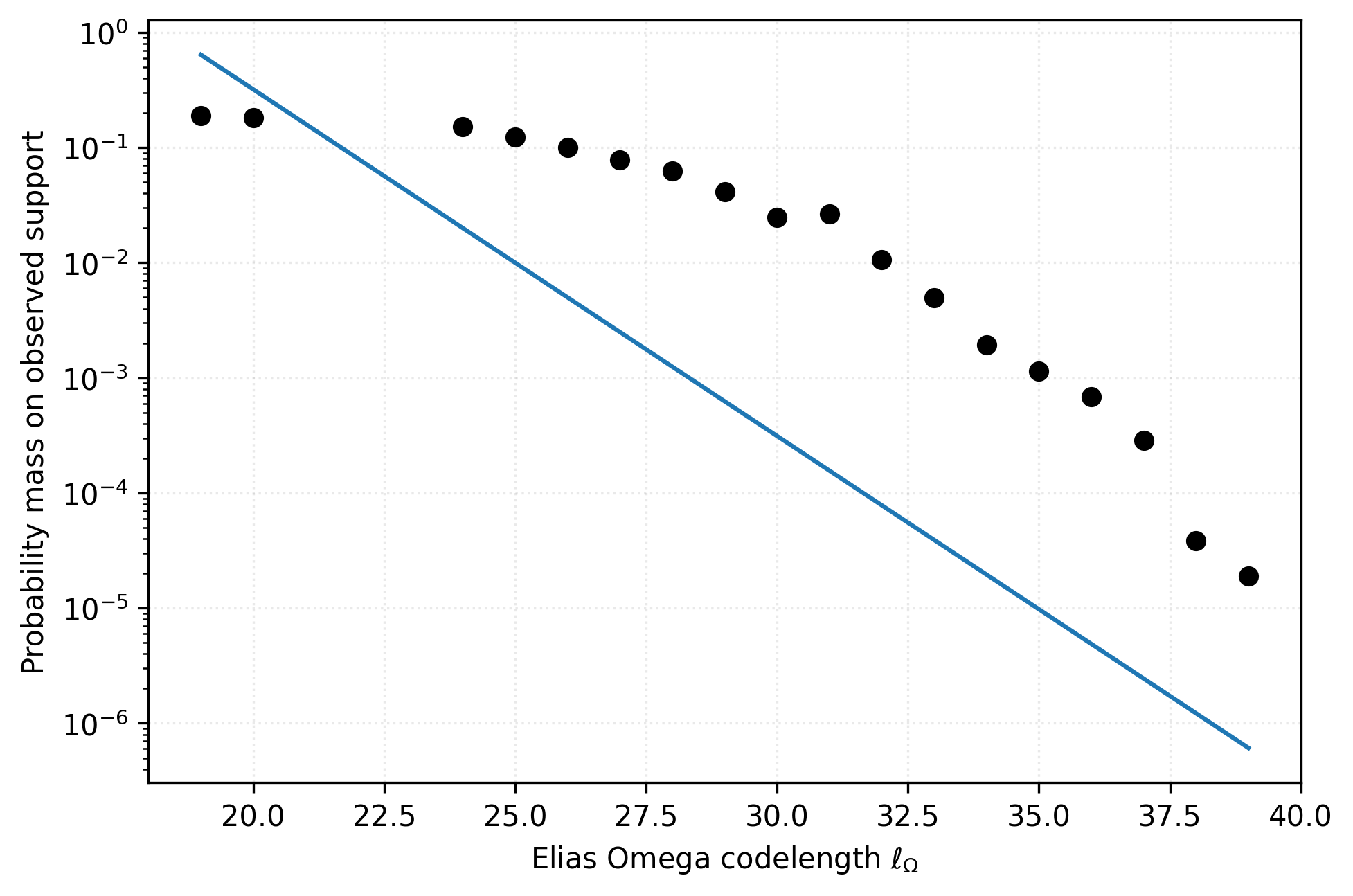}\hfill
\includegraphics[width=0.47\textwidth]{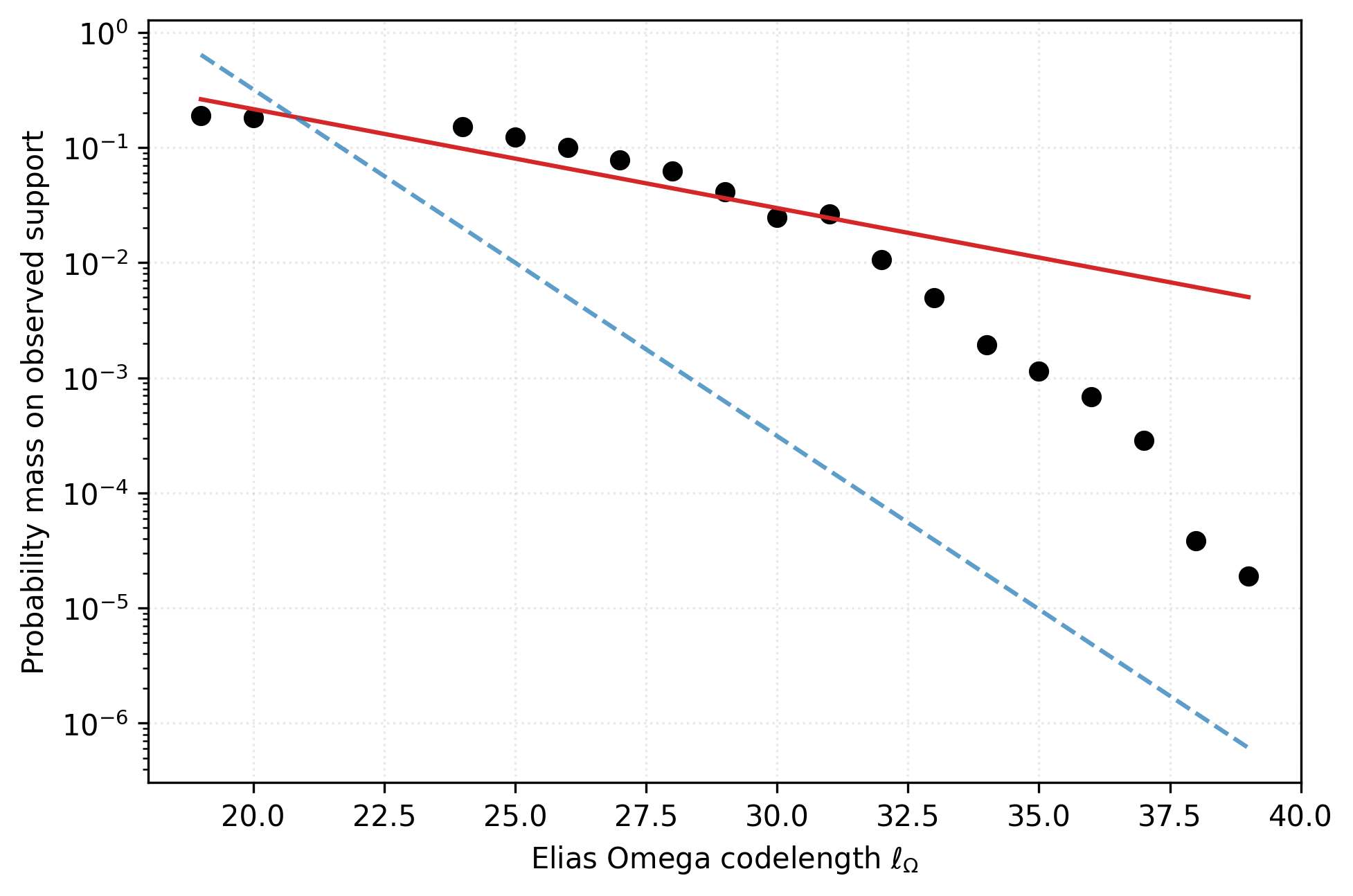}
\caption{Debian binary package archive sizes. Black points are empirical codelength frequencies. Left: blue line is the pure Omega reference. Right: red line is the fitted scaled-Omega law and dashed blue line is the pure Omega reference.}
\label{fig:debian-omega}
\end{figure}

\begin{figure}[htbp]
\centering
\includegraphics[width=0.47\textwidth]{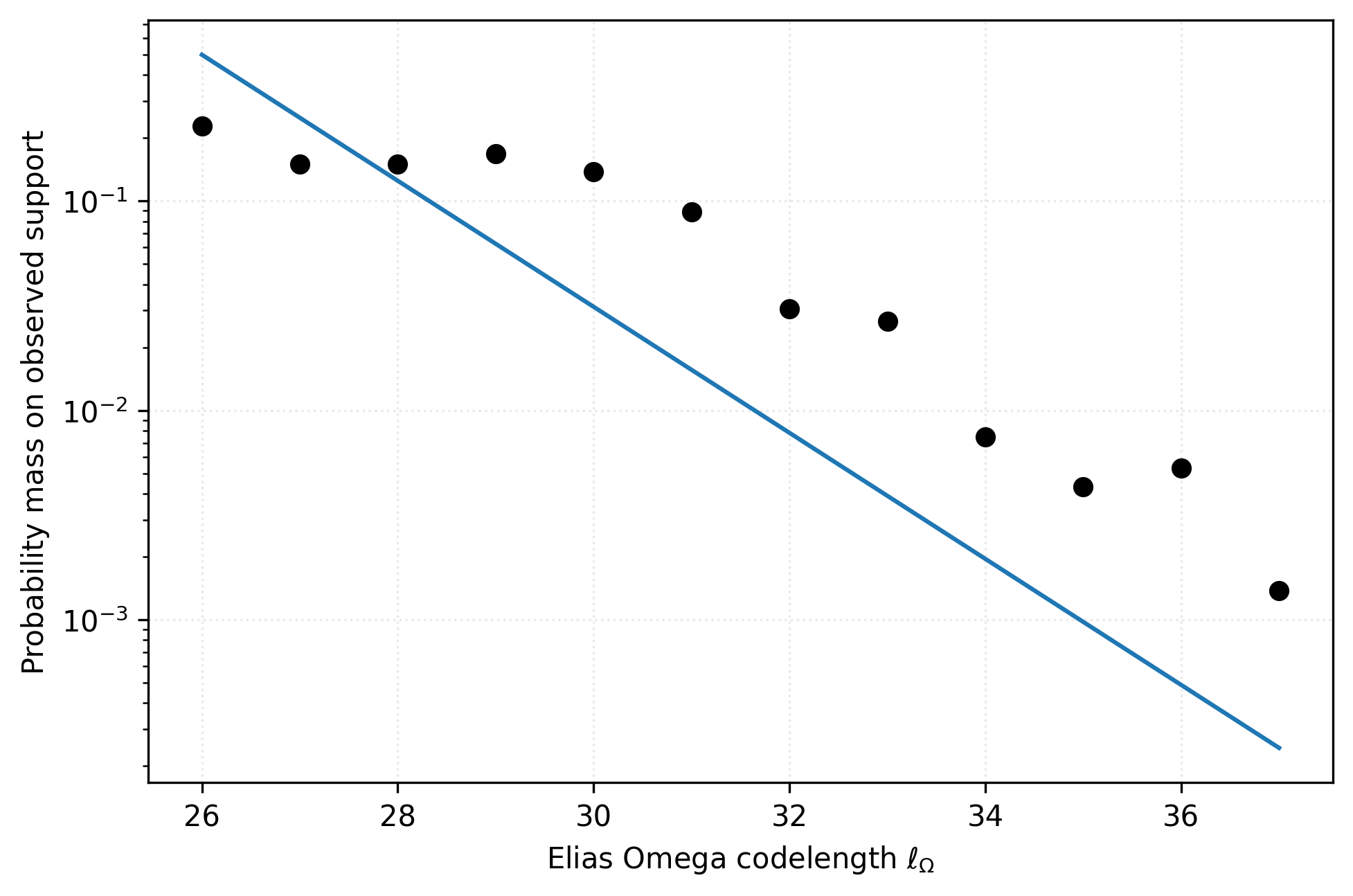}\hfill
\includegraphics[width=0.47\textwidth]{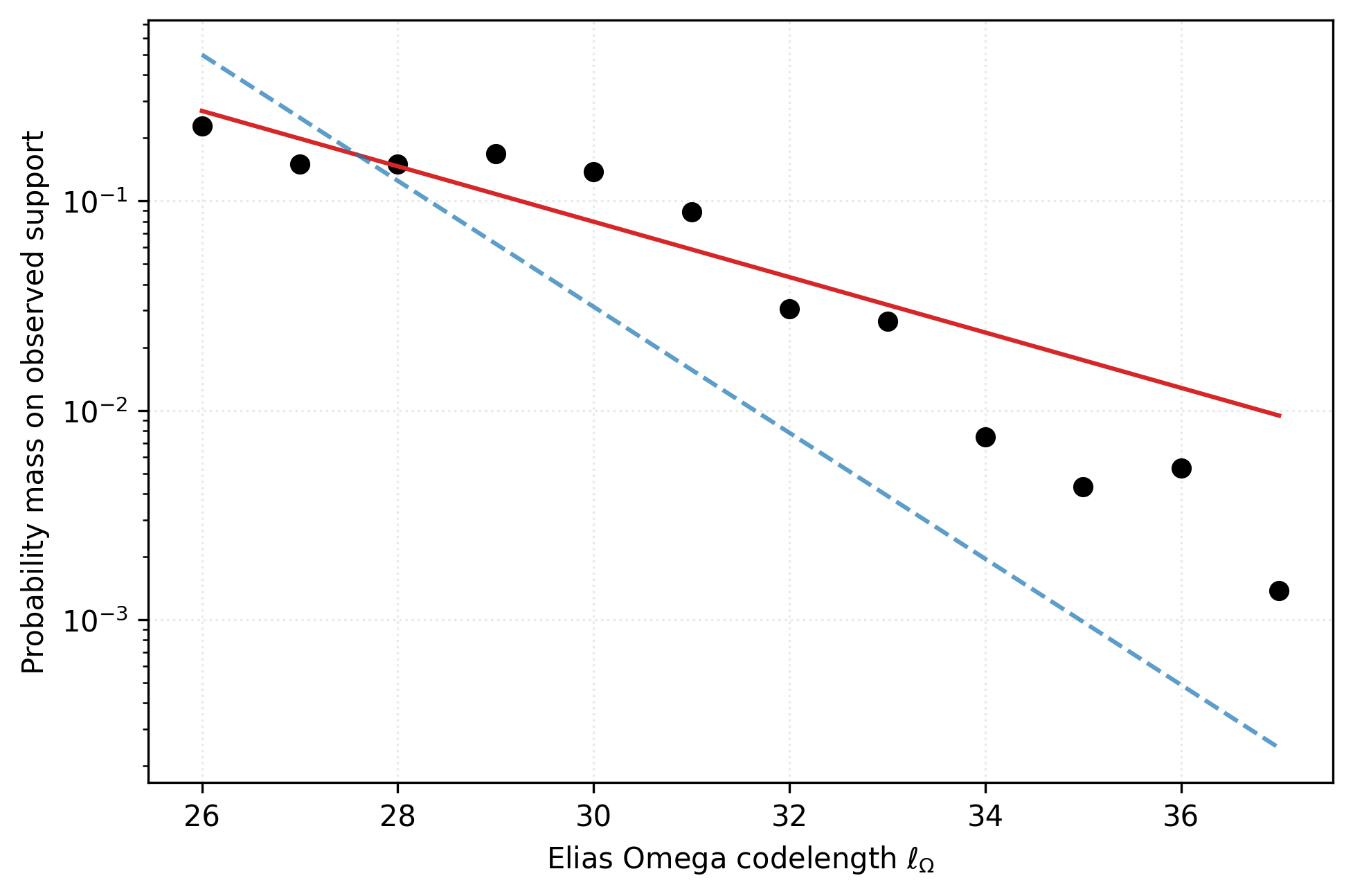}
\caption{PyPI latest-release file sizes for the top 750 projects. Black points are empirical codelength frequencies. Left: blue line is the pure Omega reference. Right: red line is the fitted scaled-Omega law and dashed blue line is the pure Omega reference.}
\label{fig:pypi-omega}
\end{figure}

\begin{figure}[htbp]
\centering
\includegraphics[width=0.47\textwidth]{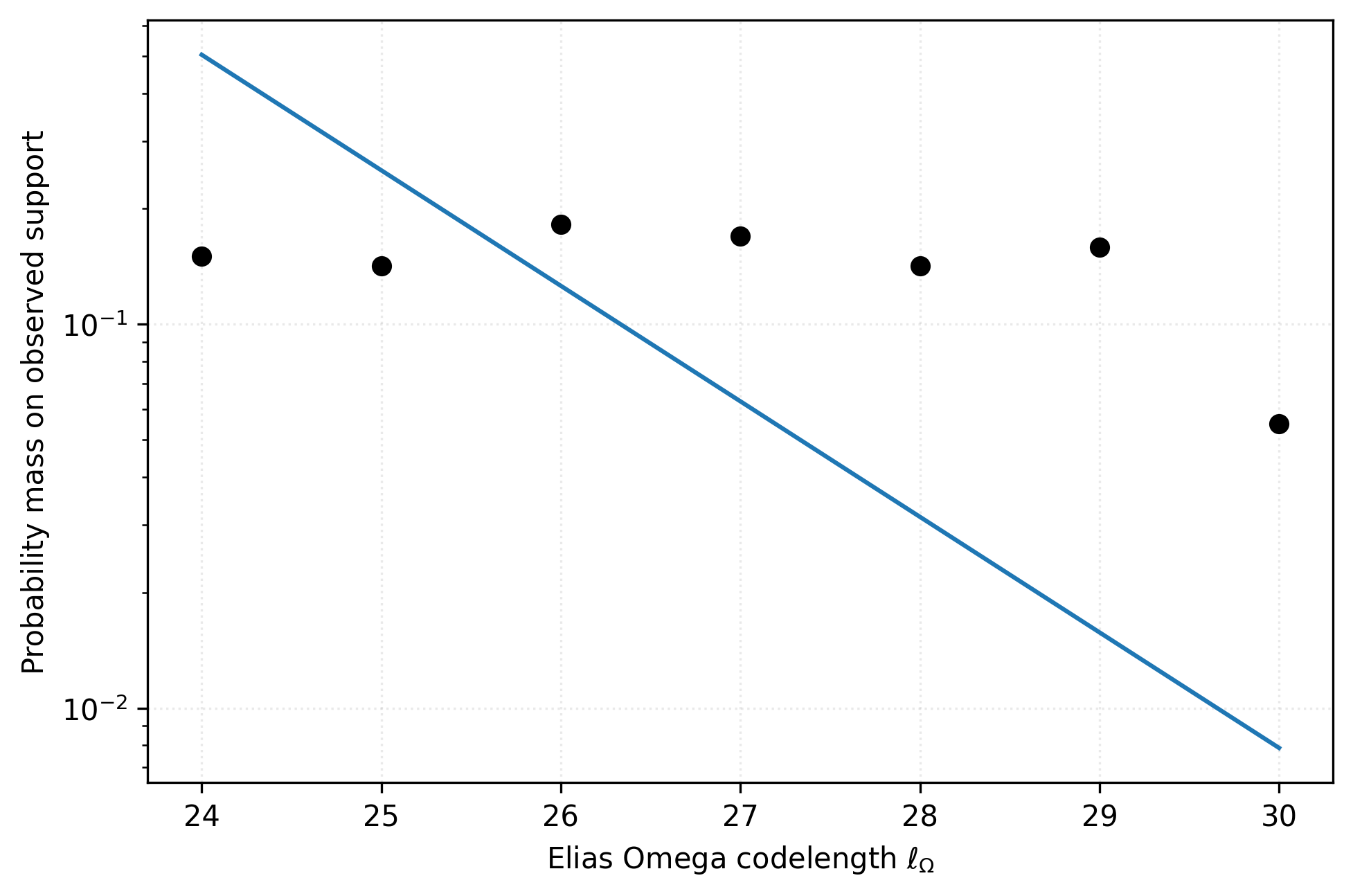}\hfill
\includegraphics[width=0.47\textwidth]{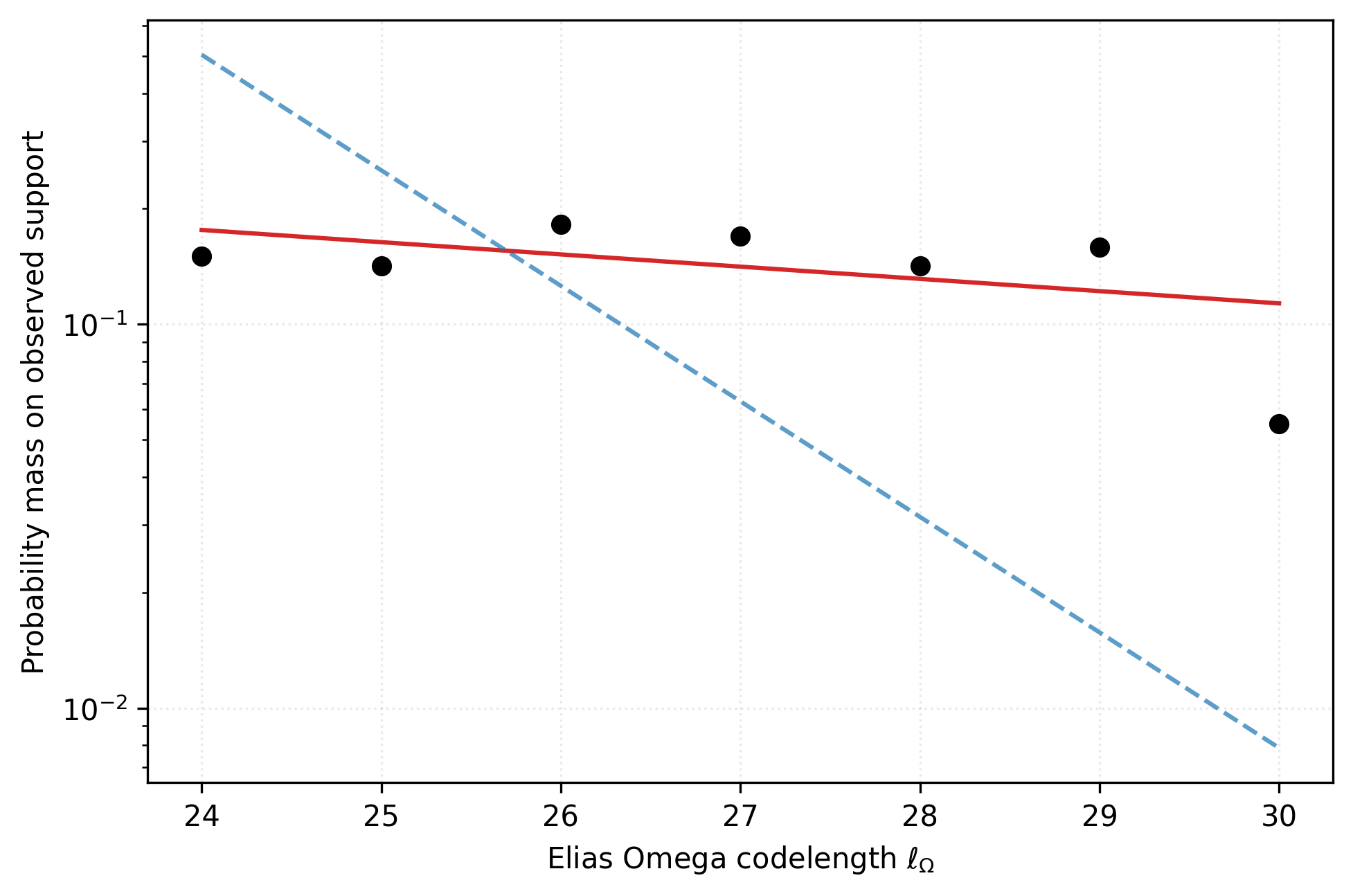}
\caption{CRAN source package archive sizes for a deterministic sample of 750 packages. Black points are empirical codelength frequencies. Left: blue line is the pure Omega reference. Right: red line is the fitted scaled-Omega law and dashed blue line is the pure Omega reference.}
\label{fig:cran-omega}
\end{figure}

\paragraph{Discussion.} In all three datasets the fitted slope is far below $\log 2\approx 0.693$. Since $q_a(\ell)$ decays more slowly when $a$ is smaller, these histograms are heavier-tailed than the pure Omega prior on the observed support. Equivalently, the effective scaled-Omega exponents $\widehat\beta=\widehat a/\log 2$ are approximately $0.285$ for Debian, $0.440$ for PyPI, and $0.106$ for CRAN. These values are not interpreted as universal constants of software production. They indicate that the observed software-size distributions lie on the high-variability side of the Omega energy scale. The empirical codelengths are proxies for integer scale and description length, not direct measurements of Kolmogorov complexity or psychology.

\section{Conclusion}
We introduced the Multiplicative Turing Ensemble (MTE), a Markov chain on positive integers driven by i.i.d.\ prime multipliers. The Maximum Entropy Principle applied to Elias' $\omega$ code length yields a natural prior on prime multipliers, though the pure $\omega$ prior $\pi_p \propto 2^{-\ell_\omega(p)}$ is a boundary case ($\lambda=1$) with infinite first moment. The \emph{scaled} $\omega$ prior $\pi_p \propto 2^{-\beta \ell_\omega(p)}$ with $\beta>1$ has finite moments and yields exponential tails for log-multipliers (modulo slow variation), which in turn generate asymptotically Pareto-type gap survival tails with exponent $\beta-1$.

Along MTE trajectories, the $\omega$ code length satisfies an almost-sure averaging law, though not ergodicity, since the chain is transient. Empirically, the fitted finite-support Omega slopes for Debian, PyPI, and CRAN are $0.1979$, $0.3047$, and $0.0734$, respectively, far below the pure-Omega slope $\log 2$. The corresponding KL divergences improve from $1.6066$ to $0.0998$ for Debian, from $0.4226$ to $0.0734$ for PyPI, and from $0.6598$ to $0.0330$ for CRAN. Within the Omega-family diagnostic, this indicates a heavier-than-pure-Omega tail regime, with effective exponents $\widehat\beta=\widehat a/\log 2<1$.

While Theorems~\ref{th:conditional} and~\ref{th:mixture-tail} require $\beta > 1$ for well-behaved Pareto gap asymptotics, the regime $\beta \leq 1$ is not pathological---it reflects systems with high variability and scale-free structure. For instance, MTEs with $\beta \approx 1$ exhibit Benford's law (logarithmic digit distributions) \cite{KR2025}, a phenomenon ubiquitous in natural datasets. The fitted $\beta < 1$ values suggest that real-world integer distributions encode greater diversity and long-tail phenomena than the pure algorithmic baseline would predict.

One possible interpretation is that the pure $\omega$ prior ($\beta=1$) represents a baseline computational model where integers are weighted solely by code length. Real systems---shaped by productive genius---exhibit heavier tails ($\beta < 1$), reflecting the presence of exceptional outliers and creative breakthroughs. The theoretical regime $\beta > 1$ ensures tractable asymptotics but may correspond to overly constrained distributions lacking the extreme contributions that characterize human-driven processes.

\section{Data availability}

All data and code used to produce this manuscript are available on GitHub~\cite{github-pareto}. 

%\appendix
%\section*{Appendix: Technical notes}

%\paragraph{Consistency of bases.} All tail statements are written in base-$e$; transitions to base-$2$ multiply exponents by constants and are absorbed into slowly varying terms.

%\paragraph{From primes to integers.} Throughout we use the prime number theorem to replace sums over primes by integrals with measure $dx/\log x$; the latter factor is slowly varying and does not affect the index of regular variation.

\printbibliography

\end{document}